\setlist[enumerate]{leftmargin=2.25em, labelsep=*}
\pgfplotsset{compat=1.18} 
\titlespacing*{\section}{2pt}{6ex}{2ex}
\titlespacing{\paragraph}{0em}{0em}{0.5em}
\titlespacing{\subparagraph}{0em}{0em}{0.5em}
\providecommand{\keywords}[1]
{\small	
  \textit{Keywords:} #1
}
\providecommand{\MSC}[1]
{\small	
  \textit{2000 MSC:} #1
}
\theoremstyle{plain}
\newtheorem{theorem}{Theorem}[section] 
\newtheorem*{theorem*}{Theorem} 
\newtheorem*{definition*}{Definition}
\newtheorem*{lemma*}{Lemma}
\newtheorem{corollary}[theorem]{Corollary}
\newtheorem*{corollary*}{Corollary}
\newtheorem{remark}[theorem]{Remark}
\newtheorem*{remark*}{Remark}
\newtheorem{example}[theorem]{Example}
\newtheorem*{example*}{Example}
\newcommand{\cF}{\mathcal{F}}
\newcommand{\cP}{\mathcal{P}}
\newcommand{\cX}{\mathcal{X}}
\newcommand{\bN}{\mathbb{N}}
\newcommand{\bP}{\mathbb{P}}
\newcommand{\bR}{\mathbb{R}}
\newcommand{\bU}{\mathbb{U}}
\newcommand{\bV}{\mathbb{V}}
\newcommand{\bX}{\mathbb{X}}
\newcommand{\bY}{\mathbb{Y}}
\newcommand{\bZ}{\mathbb{Z}}
\newcommand{\xx}{\boldsymbol{x}}
\newcommand{\yy}{\boldsymbol{y}}
\newcommand{\zz}{\boldsymbol{z}}
\newcommand{\UU}{\boldsymbol{U}}
\newcommand{\1}{\mathds{1}}
\newcommand{\de}{\mathrm{\,d}}
\newcommand{\eqd}{\stackrel{\mathrm{d}}=}
\newcommand{\var}{\mathrm{Var}}
\newcommand{\rank}{\mathsf{rank}}
\begin{document}

\title{\Large\bf Hierarchical variable clustering based on the predictive strength between random vectors\footnote{This version: \today}}

\author{
\normalsize Sebastian Fuchs\footnote{Department for Artificial Intelligence \& Human Interfaces, University of Salzburg, Hellbrunner Strasse 34, 5020, Salzburg, Austria, {\tt sebastian.fuchs@plus.ac.at} } \\[1ex]
\and 
\normalsize Yuping Wang\footnote{Department for Artificial Intelligence \& Human Interfaces, University of Salzburg, Hellbrunner Strasse 34, 5020, Salzburg, Austria, {\tt yuping.wang@plus.ac.at}} \\ [1ex]
}

\date{}

\maketitle

\let\thefootnote\relax

\begin{abstract}
\noindent A rank-invariant clustering of variables is introduced that is based on the predictive strength between groups of variables, i.e., two groups are assigned a high similarity if the variables in the first group contain high predictive information about the behaviour of the variables in the other group and/or vice versa.
The method presented here is model-free, dependence-based and does not require any distributional assumptions.
Various general invariance and continuity properties are investigated, with special attention to those that are beneficial for the agglomerative hierarchical clustering procedure. A fully non-parametric estimator is considered whose excellent performance is demonstrated in several simulation studies and by means of real-data examples.
\end{abstract} 

\keywords{agglomerative hierarchical variable clustering; dissimilarity; directed dependence; (mutual) perfect dependence}
\\
\MSC{62G05; 62H05; 62H20}

\section{Introduction}
\label{Sec:Intro}

A hierarchical variable cluster analysis aims at identifying interrelationships and associations present within a portfolio \(\cX = \{X_1, \dots, X_m\}\) of \(m \geq 3\) random variables and, thus, revealing important insights about the inner structure of the portfolio.
Variable clustering can therefore be particularly useful as a pre-processing step when analysing complex structured data in an unsupervised setting.
The key idea underlying this type of data analysis is to arrange \enquote{similar} objects (i.e., random variables) into groups (i.e., clusters) \cite{koch_analysis_2013}, though what is meant by \enquote{similarity} may have very different interpretations.

A common approach in hierarchical clustering so far is to consider similarity as a feature that occurs primarily between two individual random variables and that is detectable using measures of pairwise association such as (Pearson) correlation, Kendall's tau, Spearman's rho, or variants of these, in combination with some linkage function that assigns a degree of similarity also to groups of more than two random variables but takes into account exclusively pairwise information
(see, e.g., \cite{koch_analysis_2013, bonanno_2004} for clustering financial data, 
\cite{bonanno_2004, everitt_cluster_2011, fuchs_dissimilarity_2021, son_modified_2008} for clustering gene expression data, or \cite{bonanomi_defining_2017} for clustering preferences).
In this context, similarity is understood as a pairwise comovement of random variables, either linear or monotonic.
Alternative measures of bivariate similarity rely on tail dependence and tail correlation \cite{de_luca_tail_2011, DiLDurPap17RBN, DurPapTor14ADAC, fuchs_dissimilarity_2021} or on divergence measures such as mutual information \cite{banerjee_clustering_2005, emmert-streib_information_2008, jiang_clustering_2013, kojadinovic_agglomerative_2004, kraskov_hierarchical_2005, martinez_sotoca_supervised_2010, yang_novel_2019}.

The approaches mentioned above share a common drawback:
They all treat similarity as a pairwise feature and hence neglect possible higher-dimensional dependencies that can occur among more than two random variables.
Recently, in \cite{fuchs_dissimilarity_2021}, the authors solved this issue by introducing a general framework for quantifying multivariate similarity enabling higher dimensional dependencies to be taken into account.
In particular, they proposed several similarity measures based on multivariate concordance, thus treating similarity as equivalent to comonotonicity (an extreme dependence concept studied, e.g., in \cite{Dhaetal02a,PucWan15}).
With the same objective, in \cite{gijbels2023}, the authors proposed $\Phi$-dependence for quantifying multivariate similarity, thus treating similarity as equivalent to the presence of a dependence structure that is singular with respect to the product of its marginals.

In contrast to \cite{fuchs_dissimilarity_2021,gijbels2023}, in the present paper, two (disjoint) subsets \(\bX, \bY \subseteq \cX\) of random variables are assigned high similarity if the variables in set \(\bX\) contain high predictive information about the behaviour of the variables in the other set \(\bY\) and/or vice versa.
Thus, if two groups of random variables are clustered, then a high degree of functional dependence occurs between the two groups; in other words: the variables in one of the groups can be used to predict the variables in the second group.
\\
For that purpose, a hierarchical clustering procedure is introduced that is based on measures of predictability for multi-outcome data. Such a measure (of directed dependence) assigns to every pair of subsets \((\bX, \bY)\) a value in $[0,1]$,
equals $0$ if and only if 
the variables within \(\bY\) are independent of those within \(\bX\), 
and 
equals $1$ if and only if the variables within \(\bY\) perfectly depend on the variables within \(\bX\), i.e., 
each \(Y_k \in \bY\) equals a measurable function of the variables within \(\bX\) almost surely.
\\
To the best of the authors' knowledge, so far the only two types of measure of predictability applicable to multi-outcome data have been introduced in \cite{deb2020} and \cite{ansari_simple_2023}, whereby we here focus on the method introduced in \cite{ansari_simple_2023} due to its numerous favourable properties, the latter being a natural extension of Chatterjee's famous \enquote{coefficient of correlation} introduced in \cite{azadkia_simple_2021, chatterjee2021}.
\\
As a consequence, similarity is treated here as equivalent to (mutual) perfect dependence, a dependence concept being less restrictive than comonotonicity used in \cite{fuchs_dissimilarity_2021} but at the same time not as comprehensive as the concept of singularity used in \cite{gijbels2023}.

The paper is organized as follows: 
Following and adapting the approach in \cite{fuchs_dissimilarity_2021} for quantifying multivariate similarity, in Section \ref{Sec:Similarity} we introduce so-called dissimilarity functions being capable of detecting perfect dependence (type \ref{tPD:D}) or mutual perfect dependence (type \ref{tMPD:D}) between two non-empty subsets of $\cX$. 
For each type of dissimilarity function, key properties are presented and (where possible) its relation to alternative notions of multivariate similarity available in the literature (see, e.g., \cite{kojadinovic_agglomerative_2004,gijbels2023,fuchs_dissimilarity_2021}) is discussed. 
\\
Evidently, the choice of the dissimilarity function in question is crucial for the clustering result;
different clustering outputs may be obtained from the same data set when applying different dissimilarity functions.
And these clustering results may then provide new insights into the nature of the data, such as unexpected substructures whose revelation may trigger a more advanced data analysis.
In view of the potential impact of the chosen dissimilarity function on the clustering result, it is worth investigating whether the former exhibits certain desirable properties like invariance and continuity (Section \ref{Sec:Prop.}).
The invariance properties indicate that the introduced dissimilarity functions are dependence-based, and continuity ensures that a certain level of noise presented in the data does not cause the final clustering result to deviate too much.
\\
In Section \ref{Sec:Est} estimators for the proposed type \ref{tPD:D} and \ref{tMPD:D} dissimilarity functions are presented and it is shown that these estimators are strongly consistent and can be computed in $O(n \log n)$ time.
\\
Finally, Section \ref{Sec:Data} contains several simulation studies and real data examples which are used (1) for evaluating the performance of the different candidates for type \ref{tPD:D} and type \ref{tMPD:D} dissimilarity functions,
(2) for comparing the agglomerative hierarchical variable clustering methods based on perfect dependence (type \ref{tPD:D}) and mutual perfect dependence (type \ref{tMPD:D}) to alternative clustering methods available in the literature, 
(3) for elaborating their resilience to noise as an application of the above-mentioned continuity property and (4) for evaluating their performance when the number of random variables to be clustered is large.

Throughout this manuscript,
let \(m \geq 3\) be an integer which will be kept fixed 
and consider a (finite) set/portfolio \(\cX=\{X_1,\dots,X_m\}\) of random variables defined on the same probability space \((\Omega, \cF, \bP)\) which are always assumed to be non-degenerate, i.e., for all $k \in \{1,\dots,m\}$ the distribution of \(X_k\) does not follow a one-point distribution. 
Any subset of \(\cX\) will be denoted by upper-case black-board letters, e.g., \(\bX\), 
and let \(\cP_0 (\cX)\) denote the set of all non--empty subsets of \(\cX\).
Given a subset \(\bX = \{X_1, \dots, X_p\} \in \cP_0 (\cX)\) with \(p \leq m\), 
we indicate by \(\vec{\bX}\) a vector representation of \(\bX\), i.e., a $p$--dimensional random vector whose coordinates are distinct elements from \(\bX\). Clearly, the vector representation of any \(\bX\) need not be unique. 
Further, let \(L^0(\bR^p)\) denote the space of all $p$-dimensional random vectors.

\section{Hierarchical variable clustering based on (mutual) perfect dependence}
\label{Sec:Similarity}

Before turning to the hierarchical variable clustering procedure, 
we shall first specify what exactly we mean by \enquote{similarity} of two non-empty subsets of $\cX$
(not necessarily equal in cardinality), and how this \enquote{similarity} is to be quantified (Subsection \ref{SubSec:MPD}).
We achieve this by following the approach in \cite{fuchs_dissimilarity_2021} and introducing a so-called \enquote{dissimilarity function} (Subsection \ref{SubSec:Meth}), although we take the term similarity to be much more general than used in \cite{fuchs_dissimilarity_2021}. 
More precisely, we introduce dissimilarity functions being capable of detecting perfect dependence (type \ref{tPD:D}) or mutual perfect dependence (type \ref{tMPD:D}) between two non-empty subsets of $\cX$ (Subsection \ref{SubSec:DF.AB})
and, for each type of dissimilarity function, key properties are presented.
In addition, alternative concepts of multivariate similarity available in the literature are discussed and (where possible) their relation to (mutual) perfect dependence is elaborated
(Subsection \ref{SubSec:Comparison}).

\subsection{Quantifying (mutual) perfect dependence}
\label{SubSec:MPD}

We first introduce two dependence concepts for sets of random variables that are of key importance for this work (cf. \cite{lancaster1963}):
\begin{itemize}[leftmargin=1.5em]
\item[] \emph{Perfect dependence:}
A set \(\bY = \{Y_1,\dots,Y_q\} \in \cP_0(\cX)\), \(q \geq 1\), is said to be \emph{perfectly dependent} on the set \(\bX = \{X_1,\dots,X_p\} \in \cP_0(\cX)\), \(p \geq 1\),  
if, for every \(k \in \{1,\dots,q\}\), 
there exists some measurable function $f_k$ such that $Y_k = f_k (\vec{\bX})$ almost surely.

\item[] \emph{Mutual perfect dependence:}
Two sets \(\bX \in \cP_0(\cX)\) and \(\bY \in \cP_0(\cX)\)
are said to be \emph{mutually perfectly dependent} if \(\bY\) is perfectly dependent on \(\bX\) \emph{and} \(\bX\) is perfectly dependent on \(\bY\).
\end{itemize}
Perfect dependence of \(\bY\) on \(\bX\) is a concept of directed dependence in which the variables within the set \(\bX\) provide full information about the variables within \(\bY\), 
irrespective of the order of variables within $\bX$ and irrespective of the order of variables within $\bY$.
Apparently, mutual perfect dependence implies perfect dependence.

To uncover dependence structures with high predictive information, in what follows we employ 
the so-called \emph{simple extension of Azadkia \& Chatterjee's rank correlation} \(T^q\) introduced in \cite{ansari_simple_2023}: 
For the $q$-dimensional random vector $\vec{\bY}$, \(q \in \mathbb{N}\), given the $p$-dimensional random vector $\vec{\bX}$, \(T^q\) is defined by
\begin{align}
  T^q (\vec{\bY}|\vec{\bX})
  & := 1 - \frac{q - \sum_{i=1}^{q} T(Y_i | (\vec{\bX},Y_{i-1},\dots,Y_{1}))}{q - \sum_{i=1}^{q} T(Y_i | (Y_{i-1},\dots,Y_{1}))} 
  \label{Tq}
\end{align}
with \(T(Y_1|\emptyset) :=0 \)
and \(T\) denoting the \emph{simple measure of conditional dependence} introduced in \cite{azadkia_simple_2021} and given by
\begin{align}
  T(Y|\vec{\bX}) 
  & := \frac{\int_{\bR} \var\big(\bP(Y\geq y|\vec{\bX}) \big) \de \bP^Y(y)}{\int_{\bR} \var\left(\1_{\{Y\geq y\}}\right) \de \bP^Y(y)} 
  \label{T}
  \,.
\end{align}
\(T\) as defined in \eqref{T} fulfills 
\begin{align*}
    T(Y|\vec{\bX})
    & = \frac{\int_{\bR^p \times \bR^p} \int_\bR \Big[\bP(Y\geq y |\vec{\bX}=\xx_1)-\bP(Y\geq y |\vec{\bX}=\xx_2)\Big]^2 \de \bP^Y(y) \de \bP^{\vec{\bX}} \otimes \bP^{\vec{\bX}} (\xx_1,\xx_2)}{2 \int_\bR \var\left(\1_{\{Y\geq y\}}\right) \de \bP^Y(y)}
\end{align*}
and can therefore be considered either as a coefficient indicating how sensitive the conditional distribution of \(Y\) given \(\vec{\bX} = \xx\) is on \(\xx\) or, according to \eqref{T}, as an index quantifying the variability of the conditional distributions; see \cite{ansari_sens_2023}.

As shown in \cite{ansari_simple_2023}, \(T^q (\vec{\bY}|\vec{\bX})\) is invariant with respect to permutations of the variables within \(\bX = \{X_1,\dots,X_p\}\), and permutation invariance with respect to the variables within \(\bY = \{Y_1, \dots, Y_q\}\) can be achieved by defining the map
\begin{align}\label{Def.MoP}
  \kappa^{q|p} (\vec{\bY}|\vec{\bX}) 
	:= \frac{1}{q!} \sum_{\sigma\in S_q} T^q(\sigma(\vec{\bY})|\vec{\bX})
\end{align}
where \(\sigma(\vec{\bY}):= (Y_{\sigma_1},\ldots,Y_{\sigma_q})\) for \(\sigma=(\sigma_1,\ldots,\sigma_q)\in S_q\,,\)
the set of permutations of \(\{1,\ldots,q\}\).
$\kappa^{q|p}$ is a \emph{measure of predictability} for the $q$-dimensional random vector $\vec{\bY}$ given the $p$-dimensional random vector $\vec{\bX}$, i.e., 
it fulfills the following properties; see, e.g., \cite[Theorem 2.1, Corollary 2.7]{ansari_simple_2023}:
\begin{enumerate}[label=(A\arabic*), leftmargin=2.5em]
\item\label{MoP.A1} $0 \leq \kappa^{q|p}(\vec{\bY}|\vec{\bX}) \leq 1$.
\item\label{MoP.A2} $\kappa^{q|p}(\vec{\bY}|\vec{\bX}) = 0$ if and only if the variables in $\bY$ are independent of those in $\bX$.
\item\label{MoP.A3} $\kappa^{q|p}(\vec{\bY}|\vec{\bX}) = 1$ if and only if $\bY$ is perfectly dependent on $\bX$.
\end{enumerate}
In addition to the aforementioned three characteristics \ref{MoP.A1} to \ref{MoP.A3} which are crucial, 
\(\kappa^{q|p}\) fulfills a number of additional desirable properties including the information gain inequality, the characterization of conditional independence, various invariance properties, a monotonicity and a continuity property, the data processing inequality, and the self-equitability; we refer to \cite{ansari_simple_2023} for more background on these characteristics.
Several of these properties can be transferred to the corresponding dissimilarity functions and are studied in Section \ref{Sec:Prop.} below.

\begin{remark}~~
In \cite{deb2020}, the authors introduced the so-called \emph{kernel partial correlation (KPC) coefficient}, the so far only other known measure of predictability applicable to multi-outcome data, i.e., a measure satisfying \ref{MoP.A1} to \ref{MoP.A3}.
KPC measures the maximum mean discrepancy distance between the conditional distributions \(\bP^{\vec{\bY}|\vec{\bX} = \xx}\) and the unconditional distribution \(\bP^{\vec{\bY}}\),
and hence provides information on how much information on \(\vec{\bY}\) is gained by knowing \(\vec{\bX}\).
To the best of the authors' knowledge, only few properties of the KPC coefficient are known so far \cite[Proposition 8]{deb2020}, which is why in the present article we focus on \(\kappa^{q|p}\).
\end{remark}

\subsection{Dissimilarity functions based on (mutual) perfect dependence}
\label{SubSec:Meth}

For examining and performing a cluster analysis based on (mutual) perfect dependence,
we require the notion of a dissimilarity function:
A $(p,q)$-\emph{dissimilarity function} \(d^{p,q}\) based on (mutual) perfect dependence is a mapping that assigns to every pair 
\((\vec{\bX},\vec{\bY}) \in L^0(\bR^p) \times L^0(\bR^q)\) a non-negative value in \([0,\infty)\) with the following properties: 
\begin{enumerate}[label=(\(d\)\arabic*)]
\item \label{tSym:D}
Symmetry and permutation invariance:
The identities
\begin{align*}
    d^{p,q} (\vec{\bX},\vec{\bY}) 
    & = d^{q,p} (\vec{\bY},\vec{\bX}) 
    &   d^{p,q} (\vec{\bX},\vec{\bY}) 
    & = d^{p,q}(\sigma_{p} (\vec{\bX}),\sigma_{q}(\vec{\bY}))
\end{align*} 
hold for all \(\sigma_{p} \in S_{p}\) and  \(\sigma_{q} \in S_{q}\).

\item \label{tLI:D}
Law-invariance: 
\(d^{p,q} (\vec{\bX},\vec{\bY}) = d^{p,q} (\vec{\bX}_1,\vec{\bY}_1)\)
for all \((\vec{\bX},\vec{\bY}), (\vec{\bX}_1,\vec{\bY}_1) \in L^0(\bR^p) \times L^0(\bR^q) \) such that $(\vec{\bX},\vec{\bY}) \stackrel{d}{=} (\vec{\bX}_1,\vec{\bY}_1)$.

\item \label{tSim:D}
Multivariate similarity 
  \begin{enumerate}[label=(\Alph*)]
  \item \label{tPD:D} 
  based on perfect dependence: \\
  \(d^{p,q} (\vec{\bX},\vec{\bY}) = 0\) if and only if either \(\bY\) is perfectly dependent on \(\bX\) \emph{or} \(\bX\) is perfectly dependent on \(\bY\).
  \item \label{tMPD:D}
  based on mutual perfect dependence: \\
  \(d^{p,q} (\vec{\bX},\vec{\bY}) = 0\) if and only if both \(\bY\) is perfectly dependent on \(\bX\) \emph{and} \(\bX\) is perfectly dependent on \(\bY\).
  \end{enumerate}
\end{enumerate}

\ref{tSym:D} expresses a natural symmetry property between and within random vectors, i.e., the dissimilarity degree between two random vectors does not change when permuting the variables within each vector or when interchanging the random vectors.
In a nutshell, the dissimilarity degree \emph{does} depend on the information contained in the variables of a random vector, but not on how the variables are arranged (so it is a property of sets, not vectors).
The property ensures that the definition of a dissimilarity function is set compatible.
\ref{tLI:D} states that the dissimilarity function is law-invariant; see \cite{fuchs_dissimilarity_2021} for a more detailed discussion of these properties.

\ref{tSim:D}, instead, specifies what we understand by similarity between two random vectors: the occurrence of a directed functional relationship between the involved random vectors, either in one direction only \ref{tPD:D} or in both directions \ref{tMPD:D}.
In Subsection \ref{SubSec:Comparison} below we discuss and briefly summarize alternative notions of multivariate similarity available in the literature and (if possible) point out their relation to (mutual) perfect dependence.

\begin{remark}~~\label{Rem:DF.Ind}
Bearing in mind that independence represents the dependence structure with the least predictive information, an additional requirement to a dissimilarity function based on (mutual) perfect dependence could be to assign the maximal value to \(d^{p,q} (\vec{\bX},\vec{\bY})\) in the case \(\vec{\bX}\) and \(\vec{\bY}\) are independent. We will elaborate on this aspect below in Remark \ref{Diss.MOP:Rem.Ind}.
\end{remark}

Following \cite{fuchs_dissimilarity_2021}, all the dissimilarity functions can be glued together into the following concept which allows to assign a degree of dissimilarity to any pair of random vectors, regardless of the respective dimension:
An \emph{extended dissimilarity function} (of degree $m$) is a map
\begin{align*}
  d: \bigcup_{2 \leq p+q \leq m} L^0(\bR^p) \times L^0(\bR^q) \to [0,\infty)
\end{align*}
whose restriction
$d^{p,q} := d|_{L^0(\bR^p) \times L^0(\bR^q)}$ to $L^0(\bR^p) \times L^0(\bR^q)$ is a $(p,q)$-\emph{dissimilarity function} for all $ p,q \in \mathbb{N} $ with $ 2 \le p+q \le m$.

Below, the different steps of an agglomerative hierarchical clustering algorithm based on an extended dissimilarity function \(d\) are presented.
Recall that the overall objective is to determine a suitable partition of a given (finite) set \(\cX = \{X_1,\dots,X_m\}\) of \(m \geq 3\) random variables into non-empty and non-overlapping classes.
\begin{enumerate}[label=(Step \arabic*), leftmargin=4em]
\item 
Each random variable \(X_{k} \in \cX\) is assigned to its own class \(\bX_{k}\), \(k \in \{1,\dots,m\}\).

\item 
For each pair of classes \(\bX_1\) and \(\bX_2\), a degree of dissimilarity is calculated and recorded in a dissimilarity matrix. 

\item 
A pair of classes exhibiting the smallest degree of dissimilarity (i.e., being most similar) is identified and merged,
and the number of classes is reduced by one.

\item Steps 2 to 4 are repeated until the number of classes equals \(1\).
\end{enumerate}

\subsection{Type \ref{tPD:D} and \ref{tMPD:D} dissimilarity functions}
\label{SubSec:DF.AB}

We introduce a variety of type \ref{tPD:D} and type \ref{tMPD:D} dissimilarity functions, all of which are combinations of the coefficient \(\kappa\) defined in \eqref{Def.MoP} and a given merging function that aggregates the individual degrees of predictability for the two directions.
For each type, we recommend a suitable \enquote{aggregating} function (Remark \ref{Rem:DF.Recommend}) and present key properties of the resulting dissimilarity functions (Theorem \ref{MainThmDF}).

Inspired by correlation distances \cite{koch_analysis_2013}, i.e., 
Kendall' tau distance and Spearman's rho distance, 
we aim at constructing dissimilarity functions based on the measure of predictability \(\kappa\).
For \(p,q \geq 1\) and a function \(A: [0,1]^2 \to [0,\infty)\), 
we define the mapping \(d_A^{p,q}: L^0(\bR^p) \times L^0(\bR^q) \to [0,\infty)\) by
\begin{align} \label{Diss.MOP:Represent}
  d_A^{p,q} (\vec{\bX},\vec{\bY})
  := A \big(1-\kappa^{p|q}(\vec{\bX}|\vec{\bY}), 1-\kappa^{q|p}(\vec{\bY}|\vec{\bX})\big).
\end{align}
The mapping \(A\) \enquote{aggregates} the individual degrees of predictability in a desirable way.

\begin{corollary}\label{Diss.MOP:DF}
\(d_A^{p,q}\) is a \((p,q)\)-dissimilarity function if and only if \(A\) is symmetric and
\begin{itemize}
\item[{\rm \ref{tPD:D}}] \(A(0,v) = 0 = A(u,0)\) with \(A(u,v)>0\) for all \((u,v) \in (0,1]^2\).
\item[{\rm \ref{tMPD:D}}] \(A(0,0) = 0\) with \(A(u,v)>0\) for all \((u,v) \in [0,1]^2 \backslash \{(0,0)\}\).
\end{itemize}
\end{corollary}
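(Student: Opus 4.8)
The plan is to verify, property by property, the equivalence between $d_A^{p,q}$ being a $(p,q)$-dissimilarity function and the stated requirements on $A$, using throughout the properties of $\kappa$ established in \cite{ansari_simple_2023}: the range bound \ref{MoP.A1}, the independence characterization \ref{MoP.A2}, the perfect-dependence characterization \ref{MoP.A3}, the invariance of $\kappa^{q|p}(\vec{\bY}|\vec{\bX})$ under permutations of the entries of $\bX$ and of $\bY$, and its law-invariance. As a preliminary observation, \ref{MoP.A1} guarantees that $\bigl(1-\kappa^{p|q}(\vec{\bX}|\vec{\bY}),\,1-\kappa^{q|p}(\vec{\bY}|\vec{\bX})\bigr)\in[0,1]^2$ for every pair $(\vec{\bX},\vec{\bY})$, so it suffices that $A$ be defined on $[0,1]^2$, and then $d_A^{p,q}$ indeed maps into $[0,\infty)$.

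For the ``if'' direction, assume $A$ is symmetric and satisfies the condition attached to the type under consideration. Property \ref{tSym:D} splits into two checks: the interchange identity $d_A^{p,q}(\vec{\bX},\vec{\bY})=d_A^{q,p}(\vec{\bY},\vec{\bX})$ is exactly the symmetry of $A$ applied to the defining formula \eqref{Diss.MOP:Represent}, while the two permutation identities follow from the permutation invariance of $\kappa^{p|q}$ and $\kappa^{q|p}$. Property \ref{tLI:D} is inherited verbatim from the law-invariance of $\kappa$. For \ref{tSim:D} one unwinds $d_A^{p,q}(\vec{\bX},\vec{\bY})=0 \iff A\bigl(1-\kappa^{p|q}(\vec{\bX}|\vec{\bY}),\,1-\kappa^{q|p}(\vec{\bY}|\vec{\bX})\bigr)=0$: in case \ref{tPD:D} the hypothesis that the zero set of $A$ is precisely the two axes of $[0,1]^2$ turns this into $1-\kappa^{p|q}(\vec{\bX}|\vec{\bY})=0$ or $1-\kappa^{q|p}(\vec{\bY}|\vec{\bX})=0$, which by \ref{MoP.A3} means exactly that $\bX$ is perfectly dependent on $\bY$ or $\bY$ is perfectly dependent on $\bX$; in case \ref{tMPD:D} the same computation, now with $A(0,0)=0$ and $A>0$ elsewhere, yields the ``and'' version.

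For the ``only if'' direction, assume $d_A^{p,q}$ is a $(p,q)$-dissimilarity function and read the three properties backwards. The interchange part of \ref{tSym:D} says $A\bigl(1-\kappa^{p|q}(\vec{\bX}|\vec{\bY}),\,1-\kappa^{q|p}(\vec{\bY}|\vec{\bX})\bigr)=A\bigl(1-\kappa^{q|p}(\vec{\bY}|\vec{\bX}),\,1-\kappa^{p|q}(\vec{\bX}|\vec{\bY})\bigr)$ for all $(\vec{\bX},\vec{\bY})$, while \ref{tSim:D} read through \ref{MoP.A3} (and \ref{MoP.A2}) locates the zero set of $A$; together these pin down $A$ up to the values it takes off the range of the map $(\vec{\bX},\vec{\bY})\mapsto\bigl(1-\kappa^{p|q}(\vec{\bX}|\vec{\bY}),\,1-\kappa^{q|p}(\vec{\bY}|\vec{\bX})\bigr)$. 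The substantive step is therefore a realizability claim: for a sufficiently rich collection of target pairs $(u,v)$ one must construct random vectors $\vec{\bX}\in L^0(\bR^p)$, $\vec{\bY}\in L^0(\bR^q)$ with $\kappa^{p|q}(\vec{\bX}|\vec{\bY})=1-u$ and $\kappa^{q|p}(\vec{\bY}|\vec{\bX})=1-v$; here one can lean on the continuity and monotonicity of $\kappa$ from \cite{ansari_simple_2023} together with elementary block constructions (building $\vec{\bX}$ and $\vec{\bY}$ from a shared source plus independent noise, using non-injective links to obtain intermediate predictability, and asymmetric links to decouple the two directions).

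I expect this realizability construction to be the only genuine obstacle, and it requires some care about which pairs are attainable: since independence is symmetric, $\kappa^{p|q}(\vec{\bX}|\vec{\bY})=0$ forces $\kappa^{q|p}(\vec{\bY}|\vec{\bX})=0$ by \ref{MoP.A2}, and perfect dependence of $\bX$ on $\bY$ precludes independence of $\bY$ from $\bX$ by non-degeneracy, so the attainable set is not all of $[0,1]^2$ and the constructions must be arranged to cover exactly the pairs relevant to the asserted conditions on $A$ (with the understanding that the values of $A$ outside that set do not affect $d_A^{p,q}$). Everything else — the symmetry and permutation identities, law-invariance, and the bookkeeping of the zero set — is a routine transcription of the corresponding properties of $\kappa$.
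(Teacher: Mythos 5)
Your ``if'' direction is exactly the paper's proof: the paper disposes of \ref{tSym:D}, \ref{tLI:D} and \ref{tSim:D} in one line each by citing permutation invariance, law invariance, and the characterizations \ref{MoP.A2}--\ref{MoP.A3} of \(\kappa\), which is precisely your transcription. Since the corollary is only ever invoked in the sufficiency direction (to justify symmetric copulas and their duals as aggregating functions), this is the substantive content and you have it.

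For the ``only if'' direction you have correctly put your finger on something the paper's one-line proof silently skips: deducing properties of \(A\) on all of \([0,1]^2\) from properties of \(d_A^{p,q}\) requires knowing which pairs \(\bigl(1-\kappa^{p|q}(\vec{\bX}|\vec{\bY}),\,1-\kappa^{q|p}(\vec{\bY}|\vec{\bX})\bigr)\) are attainable, and your own remark shows the attainable set is a proper subset of \([0,1]^2\): by \ref{MoP.A2}, \(u=1\) forces \(v=1\), so no pair \((1,v)\) with \(v\in(0,1)\) is ever realized and the positivity of \(A\) there can never be tested from \(d_A^{p,q}\). The paper does not engage with this at all --- its proof is the same direct transcription in both directions --- so the realizability construction you outline is not something you failed to extract from the paper; it is an extra step that a fully rigorous converse would require, and, as you observe, it can at best recover the stated conditions on the attainable set rather than on all of \([0,1]^2\). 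You leave that construction unexecuted, which is the one genuine gap in your write-up if the converse is read literally; but since the paper neither performs nor uses it, your proposal matches the paper's argument on everything the paper actually proves, with a legitimate (and, also by the paper, unresolved) caveat about the converse.
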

\begin{proof}
Since \(\kappa^{q|p}\) is invariant under permutations within each vector, property \ref{tSym:D} is equivalent to $A$ being symmetric.
Further, property \ref{tLI:D} is immediate from the fact that \(\kappa^{q|p}\) is law invariant.
Property \ref{tSim:D}, instead, follows from the fact that \(\kappa^{q|p}\) characterizes perfect dependence.
\end{proof}

As functions \(A\) for aggregating the individual degrees of predictability, 
Corollary \ref{Diss.MOP:DF} justifies the use of bivariate copulas; for more background on copulas we refer to \cite{durante_principles_2015,Joe15,nelsen_introduction_2007}.
\begin{enumerate}
\item[{\rm \ref{tPD:D}}]  
Suitable candidates for \(A\) in the case of dissimilarity functions based on perfect dependence involve copulas \(C\) being symmetric and strictly positive on \((0,1]^2\),
so that \eqref{Diss.MOP:Represent} reads
\begin{align} \label{Diss.MOP:Represent.Cop}
  d_C^{p,q} (\vec{\bX},\vec{\bY})
  := C \big(1-\kappa^{p|q}(\vec{\bX}|\vec{\bY}), 1-\kappa^{q|p}(\vec{\bY}|\vec{\bX})\big).
\end{align}
This includes copulas from most Archimedean families like Frank, Gumbel, Joe and Clayton copulas (the latter with positive parameter only) but also elliptical copulas like Gaussian copulas and Student \(t\) copulas.
In particular, the independence copula \(\Pi\) and the Fr{\'e}chet-Hoeffding upper bound \(M\) generate dissimilarity functions based on perfect dependence, namely,
\begin{align*}
  d_{M}^{p,q} (\vec{\bX},\vec{\bY})
  & = \min \big\{1-\kappa^{p|q}(\vec{\bX}|\vec{\bY}), 1-\kappa^{q|p}(\vec{\bY}|\vec{\bX})\big\},
  \\
  d_{\Pi}^{p,q} (\vec{\bX},\vec{\bY})
  & = (1-\kappa^{p|q}(\vec{\bX}|\vec{\bY})) \, (1-\kappa^{q|p}(\vec{\bY}|\vec{\bX})).
\end{align*}
According to Corollary \ref{Diss.MOP:DF}, the Fr{\'e}chet-Hoeffding lower bound \(W\) fails to generate a proper dissimilarity function. 
Figure \ref{fig:funcDD} depicts the graph of function $d_{C}^{p,q}$ for various choices of \(C\).
\begin{figure}[htbp]
    \centering
    \includegraphics[width=0.32\textwidth]{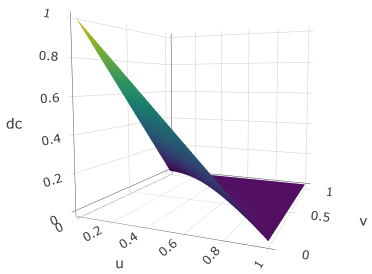} 
    \includegraphics[width=0.32\textwidth]{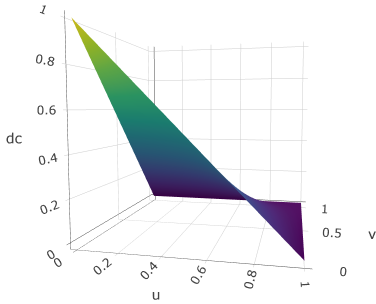} 
    \includegraphics[width=0.32\textwidth]{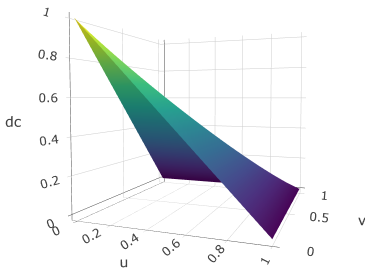} 
    \caption{The copula-based dissimilarity functions based on perfect dependence, where $C$ is the Gaussian copula with Kendall's tau $\tau \in\{-0.9, 0, 0.9\}$ (panels by column).}
    \label{fig:funcDD}
\end{figure}

\item[{\rm \ref{tMPD:D}}]
Suitable candidates for \(A\) in the case of dissimilarity functions based on mutual perfect dependence involve the dual of a symmetric copula.
The dual of a copula \(C\) is the function \(C^\ast: [0,1]^2 \to [0,1]\) given by \(C^\ast(u,v) := 1 - C(1-u,1-v)\), so that \eqref{Diss.MOP:Represent} then reads
\begin{align} \label{Diss.MOP:Represent.DCop}
  d_{C^\ast}^{p,q} (\vec{\bX},\vec{\bY})
  := 1- C \big(\kappa^{p|q}(\vec{\bX}|\vec{\bY}), \kappa^{q|p}(\vec{\bY}|\vec{\bX})\big).
\end{align}
In particular, duals of the Fr{\'e}chet-Hoeffding bounds \(W\) and \(M\) and the independence copula \(\Pi\) generate dissimilarity functions based on mutual perfect dependence, namely,
\begin{align*}
  d_{M^\ast}^{p,q} (\vec{\bX},\vec{\bY})
  & = 1 - \min \big\{\kappa^{p|q}(\vec{\bX}|\vec{\bY}), \kappa^{q|p}(\vec{\bY}|\vec{\bX})\big\},
  \\
  d_{\Pi^\ast}^{p,q} (\vec{\bX},\vec{\bY})
  & = 1 - \kappa^{p|q}(\vec{\bX}|\vec{\bY}) \, \kappa^{q|p}(\vec{\bY}|\vec{\bX}),
  \\
  d_{W^\ast}^{p,q} (\vec{\bX},\vec{\bY})
  & = 1 - \max \big\{\kappa^{p|q}(\vec{\bX}|\vec{\bY})+\kappa^{q|p}(\vec{\bY}|\vec{\bX})-1,0\big\}.
\end{align*}
Another possible construction principle is to average the individual degrees of predictability appropriately, i.e.,
\begin{align} \label{Diss.MOP:Represent.DCopAve}
  d_{\rm ave}^{p,q} (\vec{\bX},\vec{\bY})
  & = 1 - \frac{\kappa^{p|q}(\vec{\bX}|\vec{\bY}) + \kappa^{q|p}(\vec{\bY}|\vec{\bX})}{2}
\end{align}
which mimics but simplifies \(d_{W^\ast}^{p,q}\). 
Figure \ref{fig:funcMD} depicts the graph of function $d_{C^{*}}^{p,q}$ for various choices of \(C\).
\begin{figure}[htbp]
    \centering
    \includegraphics[width=0.32\textwidth]{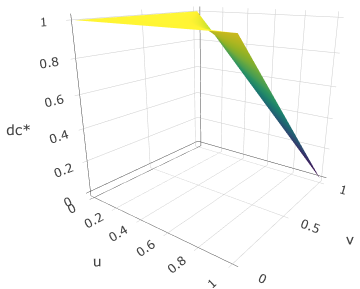} 
    \includegraphics[width=0.32\textwidth]{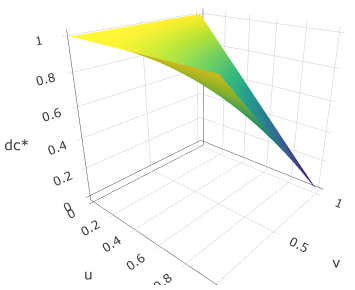} 
    \includegraphics[width=0.32\textwidth]{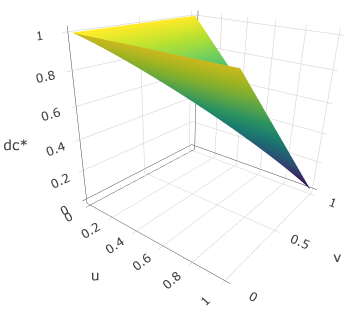} 
    \caption{The copula-based dissimilarity functions based on mutual perfect dependence, where $C$ is the Gaussian copula with Kendall's tau $\tau \in\{-0.9, 0, 0.9\}$ (panels by column).}
    \label{fig:funcMD}
\end{figure}
\end{enumerate}

We resume the discussion initiated in Remark \ref{Rem:DF.Ind}.

\begin{remark}[Behaviour of \(d^{p,q}\) when \(\vec{\bX}\) and \(\vec{\bY}\) are independent]~~\label{Diss.MOP:Rem.Ind} \\
Due to the properties of a measure of predictability, 
we have \(\kappa^{p|q}(\vec{\bX}|\vec{\bY}) = 0\) if and only if 
the variables within $\bY$ and those within $\bX$ are independent which in turn is equivalent to \(\kappa^{q|p}(\vec{\bY}|\vec{\bX})=0\).
Otherwise there exist \(\vec{\bX}\) and \(\vec{\bY}\) with \(\kappa^{p|q}(\vec{\bX}|\vec{\bY})\) being arbitrarily small but \(\kappa^{q|p}(\vec{\bY}|\vec{\bX})\) being close to \(1\) (see Example \ref{Ex.Asymm.Dep}).
From the perspective of predictive strength (and having in mind that independence represents the dependence structure with the least predictive information), it is therefore advisable that a dissimilarity function (type \ref{tPD:D} and \ref{tMPD:D}) equals \(1\) exclusively in the case of independence.\pagebreak
\\
For type \ref{tPD:D}, we observe that \(d_C^{p,q} (\vec{\bX},\vec{\bY})\) in \ref{Diss.MOP:Represent.Cop} equals \(1\) if and only if the variables in $\bY$ are independent of those in $\bX$.
Thus, using copulas for constructing dissimilarity functions of type \ref{tPD:D} is not only justified from the perspective of similarity (here: perfect dependence) but also from the perspective of independence.
\\
In order to achieve such a characterization of independence also for dissimilarity functions of type \ref{tMPD:D}, it is additionally to be required that \(A(u,v) = 1 \) if and only if \(u=1=v\). This excludes duals of copulas for the construction and draws particular attention to $d_{\rm ave}^{p,q}$.
\end{remark}

\begin{example}[Strongly asymmetric directed dependence]~~\label{Ex.Asymm.Dep} \\
Consider the random variables $X_1 \sim \mathcal{U}(0,1)$ and $X_2 = k \cdot X_1 \textrm{ modulus } 1$, $k \in \mathbb{N}$, whose joint distribution is depicted in Figure \ref{fig:scatter_ex}.
\begin{figure}[htbp] 
    \centering
    \subfigure[k=$3$]{
        \begin{minipage}[b]{0.47\textwidth}
        \centering
            \includegraphics[width=0.64\textwidth]{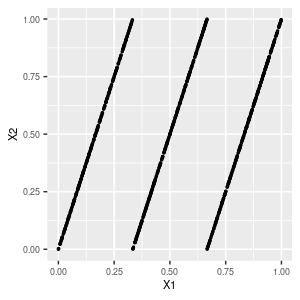} 
        \end{minipage}
    }
    \subfigure[k=$5$]{
        \begin{minipage}[b]{0.47\textwidth}
        \centering
            \includegraphics[width=0.64\textwidth]{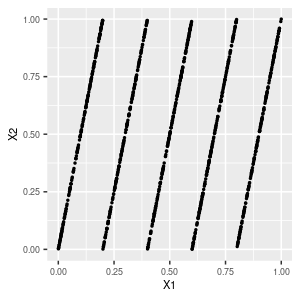} 
        \end{minipage}
    }
    \caption{Scatterplots of random variables $X_1$ and $X_2$ discussed in Example \ref{Ex.Asymm.Dep}}
    \label{fig:scatter_ex}
\end{figure}
Then \(X_2\) perfectly depends on \(X_1\) (but the variables are not mutually perfectly dependent) and straightforward but tedious calculation yields
\begin{align*}
  \kappa^{1,1}(X_1|X_2) 
  & = \frac{1}{k^2}
  & \kappa^{1,1}(X_2|X_1) 
  & = 1\,,
\end{align*}
and hence
$d_{\Pi}^{1,1} (X_1,X_2) = 0 $ and $d_{\mathrm{ave}}^{1,1} (X_1,X_2) = \frac{1}{2}-\frac{1}{2 k^2}$.
\end{example}

Building upon the previous discussion and the simulation study presented in Subsection \ref{Sim:DF.AB} we come up with the following recommendation for suitable dissimilarity functions of type \ref{tPD:D} and type \ref{tMPD:D}.

\begin{remark}[Recommendation for suitable type \ref{tPD:D} and \ref{tMPD:D} dissimilarity functions]~~\label{Rem:DF.Recommend}
\begin{enumerate}
\item[{\rm \ref{tPD:D}}] 
In Subsection \ref{Sim:DF.AB} we evaluate the different candidates for a dissimilarity function of type \ref{tPD:D} and observe that choosing copulas that are too close to the Fr{\'e}chet-Hoeffding upper bound \(M\) exhibits a tendency for the clustering output to produce chains (see Figure \ref{fig:dissfunc.choice.A}).
Therefore, we suggest using the independence copula \(\Pi\) as aggregating function and hence the dissimilarity function \(d^{p,q}_\Pi\).    

\item[{\rm \ref{tMPD:D}}]
In Subsection \ref{Sim:DF.AB} we evaluate the different candidates for a dissimilarity function of type \ref{tMPD:D} and observe that choosing copulas that are too close to the Fr{\'e}chet-Hoeffding lower bound \(W\) exhibits poor performance (see Figure \ref{fig:dissfunc.choice.B.2}).
Therefore and in light of the discussion in Remark \ref{Diss.MOP:Rem.Ind}, we suggest using the dissimilarity function \(d^{p,q}_{\rm ave}\).   
\end{enumerate}
Figure \ref{fig:disfunc} depicts the recommended type \ref{tPD:D} and \ref{tMPD:D} dissimilarity functions.
\begin{figure}[htbp] 
    \centering
    \subfigure
    {
        \begin{minipage}[b]{0.47\textwidth}
        \centering
            \includegraphics[width=0.85\textwidth]{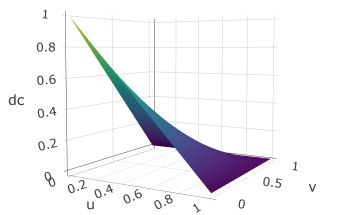} 
        \end{minipage}
    }
    \subfigure
    {
        \begin{minipage}[b]{0.47\textwidth}
        \centering
            \includegraphics[width=0.85\textwidth]{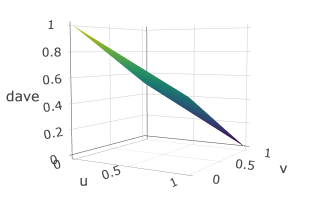} 
        \end{minipage}
    }
    \caption{Graphs of the recommended copula-based dissimilarity functions $d^{p,q}_{\Pi}$ of type \ref{tPD:D} (left panel) and $d^{p,q}_{\mathrm{ave}}$ of type \ref{tMPD:D} (right panel).}
    \label{fig:disfunc}
\end{figure}
\end{remark}

The next theorem encapsulates the findings from this section.

\begin{theorem}[Recommended type \ref{tPD:D} and \ref{tMPD:D} dissimilarity functions]~~\label{MainThmDF}
\begin{itemize}
\item[{\rm \ref{tPD:D}}]
The dissimilarity function $d^{p,q}_{\Pi}$ of type \ref{tPD:D} defined in \eqref{Diss.MOP:Represent.Cop} fulfills
\begin{enumerate}
    \item $d^{p,q}_{\Pi} (\vec{\bX},\vec{\bY}) \in [0,1]$.
    \item $d^{p,q}_{\Pi} (\vec{\bX},\vec{\bY}) = 0$ if and only if either \(\bY\) is perfectly dependent on \(\bX\) \emph{or} \(\bX\) is perfectly dependent on \(\bY\).
    \item $d^{p,q}_{\Pi} (\vec{\bX},\vec{\bY}) = 1$ if and only if 
the variables in $\bY$ are independent of those in $\bX$.
\end{enumerate}

\item[{\rm \ref{tMPD:D}}]
The dissimilarity function $d^{p,q}_{\mathrm{ave}}$ of type \ref{tMPD:D} defined in \eqref{Diss.MOP:Represent.DCopAve} fulfills
\begin{enumerate}
    \item $d^{p,q}_{\mathrm{ave}} (\vec{\bX},\vec{\bY}) \in [0,1]$.
    \item $d^{p,q}_{\mathrm{ave}} (\vec{\bX},\vec{\bY}) = 0$ if and only if \(\bY\) is perfectly dependent on \(\bX\) \emph{and} \(\bX\) is perfectly dependent on \(\bY\).
    \item $d^{p,q}_{\mathrm{ave}} (\vec{\bX},\vec{\bY}) = 1$ if and only if 
the variables in $\bY$ are independent of those in $\bX$.
\end{enumerate}
\end{itemize}
\end{theorem}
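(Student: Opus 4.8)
The plan is to derive all six assertions from the defining properties \ref{MoP.A1}--\ref{MoP.A3} of the measure of predictability $\kappa$, together with two elementary facts about numbers $a,b\in[0,1]$: the product $ab$ again lies in $[0,1]$, with $ab=1$ iff $a=b=1$ and $ab=0$ iff $a=0$ or $b=0$; and the average $\tfrac{a+b}{2}$ again lies in $[0,1]$, with $\tfrac{a+b}{2}=1$ iff $a=b=1$ and $\tfrac{a+b}{2}=0$ iff $a=b=0$. I would first record that $d^{p,q}_{\mathrm{ave}}$ is exactly the mapping $d^{p,q}_A$ of \eqref{Diss.MOP:Represent} for the symmetric aggregating function $A(u,v)=\tfrac{u+v}{2}$, which satisfies $A(0,0)=0$ and $A(u,v)>0$ on $[0,1]^2\setminus\{(0,0)\}$; together with Corollary \ref{Diss.MOP:DF} (applied to this $A$, respectively to the copula $\Pi$) this already yields items~1 and~2 in both parts, so the genuinely new point is item~3, the characterization of the value $1$.

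For $d^{p,q}_\Pi$ I would argue as follows. By \ref{MoP.A1} both factors $1-\kappa^{p|q}(\vec{\bX}|\vec{\bY})$ and $1-\kappa^{q|p}(\vec{\bY}|\vec{\bX})$ lie in $[0,1]$, hence so does their product (item~1); the product is $0$ iff one factor is $0$, i.e. iff $\kappa^{p|q}(\vec{\bX}|\vec{\bY})=1$ or $\kappa^{q|p}(\vec{\bY}|\vec{\bX})=1$, which by \ref{MoP.A3} is precisely the one-directional perfect dependence of item~2; and it is $1$ iff both factors equal $1$, i.e. iff $\kappa^{p|q}(\vec{\bX}|\vec{\bY})=\kappa^{q|p}(\vec{\bY}|\vec{\bX})=0$, which by \ref{MoP.A2} is equivalent to independence of the variables in $\bX$ from those in $\bY$ (item~3). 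For $d^{p,q}_{\mathrm{ave}}$ the argument is the mirror image: I would write $d^{p,q}_{\mathrm{ave}}(\vec{\bX},\vec{\bY})=1-\tfrac12\big(\kappa^{p|q}(\vec{\bX}|\vec{\bY})+\kappa^{q|p}(\vec{\bY}|\vec{\bX})\big)$; by \ref{MoP.A1} this lies in $[0,1]$ (item~1); it equals $0$ iff the average of the two coefficients equals $1$, which forces both to equal $1$, i.e. mutual perfect dependence by \ref{MoP.A3} (item~2); and it equals $1$ iff that average equals $0$, which forces both coefficients to vanish, i.e. independence by \ref{MoP.A2} (item~3).

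I do not anticipate a real obstacle; the proof is essentially bookkeeping. The two places that require a little care are: first, items~2 and~3 are equivalences, so one must invoke the full ``if and only if'' content of \ref{MoP.A2} and \ref{MoP.A3}, not just one implication; and second, independence of two random vectors is a symmetric relation, so the conditions $\kappa^{p|q}(\vec{\bX}|\vec{\bY})=0$ and $\kappa^{q|p}(\vec{\bY}|\vec{\bX})=0$ coming from \ref{MoP.A2} describe the same event. This is why in item~3 for $d^{p,q}_\Pi$ the requirement that \emph{both} factors equal $1$ collapses to a single independence statement, and why item~3 can be phrased with either ordering of $\bX$ and $\bY$ without change.
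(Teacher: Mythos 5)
Your proposal is correct and follows essentially the same route as the paper, which establishes items 1 and 2 via Corollary \ref{Diss.MOP:DF} together with properties \ref{MoP.A1}--\ref{MoP.A3} of $\kappa$, and item 3 via the elementary observations in Remark \ref{Diss.MOP:Rem.Ind} (a product of numbers in $[0,1]$ equals $1$ iff both factors do, an average equals $0$ iff both summands do, and the symmetry of independence collapses the two vanishing conditions on $\kappa$ into one). Nothing is missing.
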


\subsection{Comparison with alternative hierarchical variable clustering methods}
\label{SubSec:Comparison}

Alternative notions of multivariate similarity available in the literature are discussed and (where possible) their relation to (mutual) perfect dependence is elaborated.
This results in a comparison of the recommended type \ref{tPD:D} and type \ref{tMPD:D} dissimilarity functions (Subsection \ref{SubSec:DF.AB}) with $\Phi$-dependence, dissimilarity functions based on measures of multivariate concordance and dissimilarity functions based on linkage methods.

\subsubsection*{Measures of multivariate concordance.}

In \cite{fuchs_dissimilarity_2021}, multivariate similarity is considered as equivalent to comonotonicity:
Two sets of random variables \(\bX \in \cP_0(\cX)\) and \(\bY \in \cP_0(\cX)\)
are said to be \emph{comonotonic} if all the random variables within \(\bX \cup \bY\) are pairwise comonotonic,
i.e., for each \(Z_1, Z_2 \in \bX \cup \bY\) there exists a random variable \(Z\) such that 
\((Z_1,Z_2) \stackrel{d}{=} (h_1(Z), h_2(Z))\) for some increasing functions \(h_1, h_2\).
Comonotonicity is a pairwise and symmetric dependence concept in the sense that two sets of random variables are comonotonic if and only if all the pairs within their union are comonotonic; see, e.g., \cite{puccetti2010}.
\\
Apparently, comonotonicity implies mutual perfect dependence which in turn implies perfect dependence.

We illustrate the differences between these two approaches also in terms of the clustering procedure by means of the following example.

\begin{example}[(Mutual) perfect dependence versus comonotonicity]~~\label{Ex.Comp.Con} \\
Consider the four dimensional random vector \((X_1,X_2,X_3,X_4)\) with 
\((X_1,X_2)\) being distributed according to the Fr{\'e}chet-Hoeffding lower bound, i.e., \((X_1,X_2) \sim W\), 
\((X_3,X_4)\) being distributed according to the Marshall-Olkin copula with parameter 
$(\alpha,\beta) = (1,0.5)$ (see, e.g., \cite{durante_principles_2015}), and 
such that the two vectors \((X_1,X_2)\) and \((X_3,X_4)\) are independent.
Figure \ref{fig:compareCC} depicts scatterplots of the two pairs of variables.
\begin{figure}[htbp] 
    \centering
    \subfigure[Scatterplot of $X_1$ and $X_2$]{
        \begin{minipage}[b]{0.47\textwidth}
        \centering
            \includegraphics[width=0.64\textwidth]{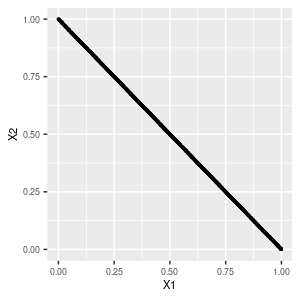} 
        \end{minipage}
    }
    \subfigure[Scatterplot of $X_3$ and $X_4$]{
        \begin{minipage}[b]{0.47\textwidth}
        \centering
            \includegraphics[width=0.64\textwidth]{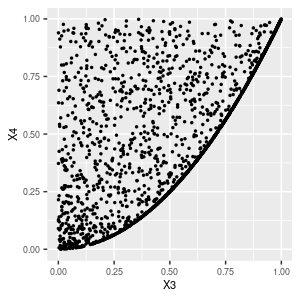} 
        \end{minipage}
    }
    \caption{Scatterplots of the pairs of variables discussed in Example \ref{Ex.Comp.Con}}
    \label{fig:compareCC}
\end{figure}
\\
According to \cite[Example 2, Example 4 and Theorem 4]{fuchs_quantifying_2022},
we obtain
\begin{align*}
    d_{\Pi}^{1,1} (X_1,X_2)
    & = (1-\kappa^{1,1}(X_1|X_2))(1-\kappa^{1,1}(X_2|X_1))
    = (1-1)^2
    = 0
    \\
    d_{\Pi}^{1,1} (X_3,X_4)
    & = (1-\kappa^{1,1}(X_3|X_4))(1-\kappa^{1,1}(X_4|X_3))
    = \left(1-\frac{2}{5}\right)\left(1-\frac{1}{3}\right)
    = \frac{2}{5}\,,
\end{align*}
and 
\begin{align*}
    d_{\mathrm{ave}}^{1,1} (X_1,X_2)
    & = 1-\frac{\kappa^{1,1}(X_1|X_2) + \kappa^{1,1}(X_2|X_1)}{2}
    = 1 - 1
    = 0
    \\
    d_{\mathrm{ave}}^{1,1} (X_3,X_4)
    & = 1-\frac{\kappa^{1,1}(X_3|X_4) + \kappa^{1,1}(X_4|X_3)}{2}
    = 1 - \frac{11}{30}
    = \frac{19}{30}\,.
\end{align*}
\\
Thus, the hierarchical clustering algorithms based on the recommended type \ref{tPD:D} and \ref{tMPD:D} dissimilarity functions first cluster \(X_1\) and \(X_2\) and then \(X_3\) and \(X_4\), i.e., both algorithms prefer the dependence structure of \((X_1,X_2)\) over that of \((X_3,X_4)\).
From the perspective of perfect dependence and mutual perfect dependence this is the correct choice as \(X_1\) contains more predictive information about \(X_2\) than \(X_3\) about \(X_4\) (and vice versa).
\\
Instead, the hierarchical clustering algorithms based on measures of concordance, such as those introduced in \cite[Section 3.2]{fuchs_dissimilarity_2021} that are related to Kendall's tau and Spearman's rho, clearly first cluster \(X_3\) and \(X_4\) and then \(X_1\) and \(X_2\), i.e., the algorithms prefer the less predictive but more concordant dependence structure of \((X_3,X_4)\) over that of \((X_1,X_2)\).
This is due to the fact that measures of concordance are monotone with respect to the pointwise/concordance order and that the copula of \(X_3\) and \(X_4\) pointwise exceeds the copula of \(X_1\) and \(X_2\). 
In other words: the dependence structure of \((X_1,X_2)\) is countermonotonic and hence less concordant than the dependence structure of \((X_3,X_4)\).    
\end{example}\pagebreak

\subsubsection*{$\Phi$-dependence.}

While restricting to an absolutely continuous setting, in \cite{gijbels2023} the authors introduce $\Phi$-dependence, 
a measure being capable of characterizing both the independence of random vectors (like \(\kappa\) does) and the singularity of the joint dependence structure with respect to the product of the marginal dependence structures.
Thus, in \cite{gijbels2023} multivariate similarity is considered as equivalent to the presence of a dependence structure that is singular with respect to the product of its marginals. This allows for the detection of tail dependence, in particular.
\\
Apparently, in such a setting perfect dependence implies singularity.
As a consequence, dependence structures such as given in \cite[Figure 1, right panel]{gijbels2023} are considered to be highly similar although the corresponding degree of predictability is rather low. 
We illustrate the differences between these two approaches also in terms of the clustering procedure by means of the following example.

\begin{example}[(Mutual) perfect dependence versus $\Phi$-dependence]~~\label{Ex.Comp.PhiDep} \\
Consider the four dimensional random vector \((X_1,X_2,X_3,X_4)\) with 
\((X_1,X_2)\) being distributed according to the average of the Fr{\'e}chet-Hoeffding upper and lower bound, i.e., \((X_1,X_2) \sim \tfrac{M+W}{2}\), 
\((X_3,X_4)\) being distributed according to an ordinal sum of \(\Pi\) (see, e.g., \cite{durante_principles_2015}) with respect to the partition \((0,0.3),(0.3,0.5),(0.5,0.75),(0.75,1)\), and such that the two vectors \((X_1,X_2)\) and \((X_3,X_4)\) are independent.
Figure \ref{fig:compareMI} depicts scatterplots of the two pairs of variables.
\begin{figure}[htbp] 
    \centering
    \subfigure[Scatterplot of $X_1$ and $X_2$]{
        \begin{minipage}[b]{0.47\textwidth}
        \centering
            \includegraphics[width=0.64\textwidth]{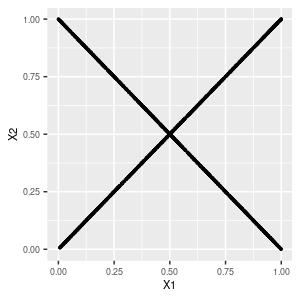} 
        \end{minipage}
    }
    \subfigure[Scatterplot of $X_3$ and $X_4$]{
        \begin{minipage}[b]{0.47\textwidth}
        \centering
            \includegraphics[width=0.64\textwidth]{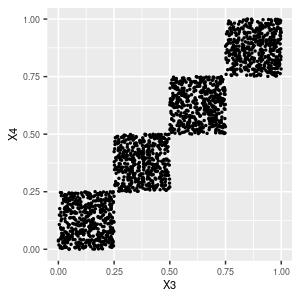} 
        \end{minipage}
    }
    \caption{Scatterplots of the pairs of variables discussed in Example \ref{Ex.Comp.PhiDep}}
    \label{fig:compareMI}
\end{figure}
\\
According to \cite[Example 2, Example 4 and Theorem 4]{fuchs_quantifying_2022} and due to fact that the variables within \((X_1,X_2)\) as well as within \((X_3,X_4)\) are exchangeable,
we have
\begin{align*}
    d_{\Pi}^{1,1} (X_1,X_2)
    & = (1-\kappa^{1,1}(X_1|X_2))(1-\kappa^{1,1}(X_2|X_1))
    = (1-\frac{1}{4})^2
    = \frac{9}{16}
    \\
    d_{\Pi}^{1,1} (X_3,X_4)
    & = (1-\kappa^{1,1}(X_3|X_4))(1-\kappa^{1,1}(X_4|X_3))
    = (1-\frac{3}{4})^2
    = \frac{1}{16}\,,
\end{align*}
and 
\begin{align*}
    d_{\mathrm{ave}}^{1,1} (X_1,X_2)
    & = 1-\frac{\kappa^{1,1}(X_1|X_2) + \kappa^{1,1}(X_2|X_1)}{2}
    = 1 - \frac{1}{4}
    = \frac{3}{4}
    \\
    d_{\mathrm{ave}}^{1,1} (X_3,X_4)
    & = 1-\frac{\kappa^{1,1}(X_3|X_4) + \kappa^{1,1}(X_4|X_3)}{2}
    = 1 - \frac{3}{4}
    = \frac{1}{4}\,.
\end{align*}

Thus, the hierarchical clustering algorithms based on the recommended type \ref{tPD:D} and \ref{tMPD:D} dissimilarity functions first cluster \(X_3\) and \(X_4\) and then \(X_1\) and \(X_2\), i.e., both algorithms prefer the dependence structure of \((X_3,X_4)\) over that of \((X_1,X_2)\).
From the perspective of perfect dependence and mutual perfect dependence this is the correct choice as \(X_3\) contains more predictive information about \(X_4\) than \(X_1\) about \(X_2\) (and vice versa).
\\
Instead, and according to \cite{gijbels2023}, the hierarchical clustering algorithm based on \(\Phi\)-dependence first clusters \(X_1\) and \(X_2\) and then \(X_3\) and \(X_4\), i.e., the algorithm prefers the less predictive but more singular dependence structure of \((X_1,X_2)\) over that of \((X_3,X_4)\).
This is due to the fact that the dependence structure of \((X_1,X_2)\) is singular and hence attains maximum \(\Phi\)-dependence.
\end{example}

\subsubsection*{Linkage methods.}

When aiming at reducing the computation time in hierarchical variable clustering, usually linkage methods come into play. A linkage method relates the dissimilarity degree between two sets of variables $\mathbb{X} = \{X_{1},\dots,X_{p}\}$ and $\mathbb{Y} = \{Y_{1},\dots,Y_{q}\}$ to the pairwise dissimilarity degrees $d^{1,1}(X_i,Y_j)$ of variables \(X_i\) and \(Y_j\) from both the sets \cite{koch_analysis_2013, everitt_cluster_2011}. 
The three most common linkage methods are
\begin{itemize}
    \item [1.] Single linkage: 
    \begin{align*}
        d_{\mathrm{single}}^{p,q}(\Vec{\mathbb{X}} ,\Vec{\mathbb{Y}}) 
        & \coloneqq \min_{ i\in \{1,\dots,p\},j\in \{1,\dots.q\}} d^{1,1}( X_{i} ,Y_{j})
    \end{align*}
    \item [2.] Average linkage: 
    \begin{align*}
        d_{\mathrm{average}}^{p,q}(\Vec{\mathbb{X}} ,\Vec{\mathbb{Y}}) 
        & \coloneqq \frac{1}{p \cdot q}\sum _{i=1}^{p}\sum _{j=1}^{q} d^{1,1}( X_{i} ,Y_{j})
    \end{align*}
    \item [3.] Complete linkage: 
    \begin{align*}
        d_{\mathrm{complete}}^{p,q}(\Vec{\mathbb{X}} ,\Vec{\mathbb{Y}}) 
        & \coloneqq \max_{ i\in \{1,\dots,p\},j\in \{1,\dots.q\}} d^{1,1}(X_{i} ,Y_{j})
    \end{align*}
\end{itemize}

Dissimilarity functions based on linkage methods exhibit many desirable properties, which in most cases are inherited from the underlying pairwise dissimilarity function $d^{1,1}$;
we refer to \cite{fuchs_dissimilarity_2021} for a detailed discussion of dissimilarity functions based on linkage methods.

Even though linkage methods offer some advantages, they all share the same structural drawback: 
They take into account solely pairwise information implying that the value of a \((p,q)\)-dissimilarity function of two sets of variables only depends on the pairwise interrelations between the two sets.
The following two examples illustrate this structural drawback also in terms of the clustering procedure:

\begin{example}[Perfect dependence versus linkage methods]~~\label{Ex.Comp.Link} \\
Consider the four dimensional random vector \((X_1, X_2, X_3, X_4)\) with 
$X_1 \sim \mathcal{N}(0,1)$, $X_2 \sim \mathcal{N}(0,1)$, 
$X_3 = \frac{X_1}{2} + X_2$ and $X_4 = X_2 + \epsilon$ where $\epsilon \sim \mathcal{N}(0,1)$.
Figure \ref{fig:ScatterplotLink} depicts the pairwise dependence structures of the four variables.
\begin{figure}[htbp]
    \centering
    \includegraphics[width=1\textwidth]{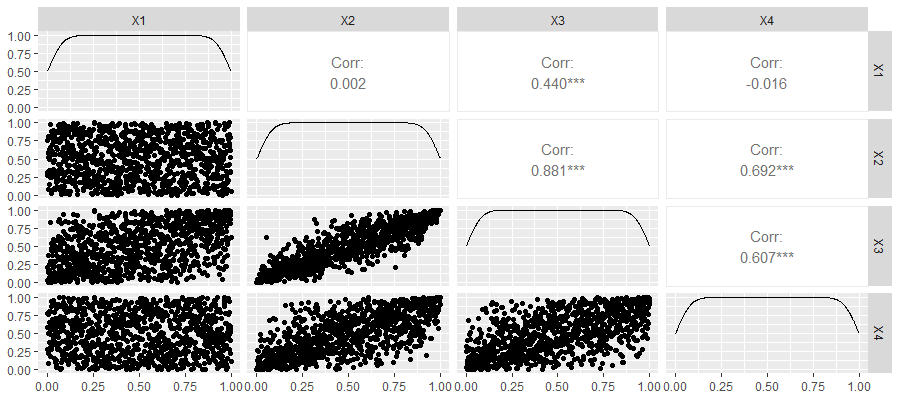}
    \caption{Pairwise dependence structure of the four random variables ($n = 1,000$) discussed in Example \ref{Ex.Comp.Link}}
    \label{fig:ScatterplotLink}
\end{figure}
To cluster the four variables, we first use the (multivariate) type \ref{tPD:D} dissimilarity function \(d^{p,q}_\Pi\) and then the (pairwise) type \ref{tPD:D} dissimilarity function \(d^{1,1}_\Pi\) in combination with the three linkage methods. The thereby obtained clustering results are presented in Figure \ref{fig:cluster_link}. 
\begin{figure}[htbp]
\centering
    \subfigure[Type \ref{tPD:D}]{
        \begin{minipage}[b]{0.47\textwidth}
        \centering
        \includegraphics[width=0.60\textwidth]{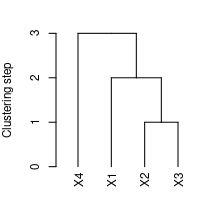} 
        \end{minipage}
    }
    \subfigure[Type \ref{tPD:D} + Single linkage]{
        \begin{minipage}[b]{0.47\textwidth}
        \centering
        \includegraphics[width=0.60\textwidth]{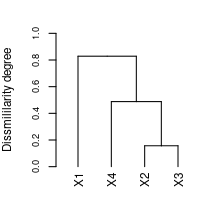} 
        \end{minipage}
    }
    \\
    \subfigure[Type \ref{tPD:D} + Average linkage]{
        \begin{minipage}[b]{0.47\textwidth}
        \centering
        \includegraphics[width=0.60\textwidth]{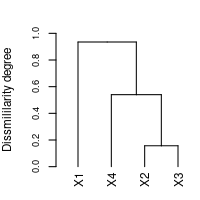} 
        \end{minipage}
    }
    \subfigure[Type \ref{tPD:D} + Complete linkage]{
        \begin{minipage}[b]{0.47\textwidth}
        \centering
        \includegraphics[width=0.60\textwidth]{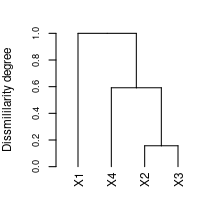} 
        \end{minipage}
    }
    \caption{Dendrograms of the hierarchical clustering in Example \ref{Ex.Comp.Link} when using type \ref{tPD:D} and type \ref{tPD:D} dissimilarity functions in combination with three linkage methods ($n=1,000$).}
    \label{fig:cluster_link}
\end{figure}
We observe that all methods first cluster variables \(X_2\) and \(X_3\).
However, in the second step the hierarchical clustering algorithms based on the (pairwise) dissimilarity functions of type \ref{tPD:D} in combination with any linkage method then clusters $X_4$ with $X_2$ and $X_3$ while the (multivariate) dissimilarity function of type \ref{tPD:D} clusters $X_1$ with $X_2$ and $X_3$.
We therefore find that even though the linkage method is more efficient in terms of computation time, 
it is not capable of recognising the perfect dependence of \(X_3\) on \((X_1,X_2)\), unlike the (multivariate) type \ref{tPD:D} dissimilarity function.
\end{example}\pagebreak

It even turns out that linkage methods fail to constitute proper (multivariate) type \ref{tPD:D} and \ref{tMPD:D} dissimilarity functions in the sense of Subsection \ref{SubSec:Meth}.

\begin{example}[Linkage methods fail to constitute type \ref{tPD:D} and \ref{tMPD:D} dissimilarity functions]~~\label{Ex:LinkageFail}
\begin{enumerate}
\item[{\rm \ref{tPD:D}}] 
Consider \(d^{1,1}_\Pi\), two independent random variables \(X_1\) and \(X_2\) and define \(X_3 := X_1 + X_2\). 
Then \(X_3\) is completely dependent on \((X_1,X_2)\),
however, 
\begin{align*}
  d^{1,1}_\Pi(X_1,X_3)  = (1-\underbrace{\kappa^{1|1}(X_1|X_3)}_{<1})(1-\underbrace{\kappa^{1|1} (X_3|X_1))}_{<1}
    > 0 
\end{align*}
as well as \(d^{1,1}_\Pi(X_2,X_3) > 0\) and hence 
\begin{align*}
    d_{\mathrm{single}}^{2,1}((X_1,X_2),X_3) 
    & > 0 = d^{2,1}_\Pi((X_1,X_2),X_3)\,, 
    \\
    d_{\mathrm{average}}^{2,1}((X_1,X_2),X_3)
    & > 0 = d^{2,1}_\Pi((X_1,X_2),X_3)\,, 
    \\
    d_{\mathrm{complete}}^{2,1}((X_1,X_2),X_3)
    & > 0 = d^{2,1}_\Pi((X_1,X_2),X_3)\,. 
\end{align*}
Therefore, linkage methods fail to constitute proper type \ref{tPD:D} dissimilarity functions.
\item[{\rm \ref{tMPD:D}}] 
Consider \(d^{1,1}_{\rm ave}\), two independent random variables \(X_1\) and \(X_2\) and define \(X_3 := X_1\) and \(X_4 := X_2\). 
Then \((X_1,X_2)\) and \((X_3,X_4)\) are mutually completely dependent,
however, 
\begin{align*}
    d_{\mathrm{average}}^{2,2}((X_1,X_2),(X_3,X_4))
    & = 0.5 > 0 = d^{2,2}_{\rm ave}((X_1,X_2),(X_3,X_4))\,,
    \\
    d_{\mathrm{complete}}^{2,2}((X_1,X_2),(X_3,X_4))
    & = 1 > 0 = d^{2,2}_{\rm ave}((X_1,X_2),(X_3,X_4))\,.
\end{align*}
In addition, \((X_1,X_2)\) and \(X_3\) fail to be mutually completely dependent,
however, 
\begin{align*}
    d_{\mathrm{single}}^{2,1}((X_1,X_2),X_3) 
    & = 0 < d^{2,1}_{\rm ave}((X_1,X_2),X_3)\,.
\end{align*}
Therefore, linkage methods even fail to constitute proper type \ref{tMPD:D} dissimilarity functions.
\end{enumerate}
\end{example}

\begin{remark}[Further alternative notions of multivariate similarity]~~\\
In \cite{fuchs_dissimilarity_2021}, the authors further introduced a dissimilarity function based on the multivariate tail dependence coefficient, hence considering similarity as equivalent to the occurrence of maximum lower tail dependence. 
\end{remark}

\section{Invariance and continuity}
\label{Sec:Prop.}

In this section a continuity property and various invariance properties (according to \cite{fuchs_dissimilarity_2021}) for the dissimilarity functions $d^{p,q}_{\Pi}$ of type \ref{tPD:D} and $d^{p,q}_{\mathrm{ave}}$ of type \ref{tMPD:D} are presented.
From the invariance properties we can conclude that the dissimilarity functions at hand are dependence-based, and continuity ensures that a certain level of noise present in the data does not cause the final clustering result to deviate too much.
We illustrate this resilience of \(d^{p,q}_{\Pi}\) and \(d^{p,q}_{\mathrm{ave}}\) to noise via a simulation study (see Subsection \ref{SubSec:Data:Noise} below).

\bigskip
Recall that \(\kappa^{q|p}\) defined in \ref{Def.MoP} is a measure of predictability, i.e., 
\(\kappa^{q|p}(\vec{\bY}|\vec{\bX}) \in [0,1]\), \(\kappa^{q|p}(\vec{\bY}|\vec{\bX})\) equals $0$ exclusively in the case the variables within $\bY$ are independent of those within $\bX$, and equals $1$ if and only if $\bY$ is perfectly dependent on $\bX$.
In addition and as mentioned in Section \ref{Sec:Similarity}, 
\(\kappa^{q|p}\) fulfills a number of additional desirable properties some of which can be transferred to the corresponding dissimilarity functions $d^{p,q}_{\Pi}$ of type \ref{tPD:D} and $d^{p,q}_{\mathrm{ave}}$ of type \ref{tMPD:D}.

As a first important property of type \ref{tPD:D} and \ref{tMPD:D} dissimilarity functions, 
we show their invariance when replacing the variables within $\bX$ and those within $\bY$ by their individual distributional transforms, i.e., the variables can be replaced by their individual ranks.
Corollary \ref{DF.Cor.DInv} is an immediate consequence of \cite[Corollaries 2.5 and 2.6]{ansari_simple_2023}.

\begin{corollary}[Distributional invariance]~~\label{DF.Cor.DInv} \\
Let \(\bU\) and \(\bV\) denote the sets of individual distributional transforms of \(\bX\) and \(\bY\), respectively, i.e.,
\(\bU = \{F_{X_1}(X_1), \dots, F_{X_p}(X_p)\}\) and \(\bV = \{F_{Y_1}(Y_1), \dots, F_{Y_q}(Y_q)\}\).
Then
\begin{align*}
  d^{p,q} (\vec{\bX},\vec{\bY})
  = d^{p,q} (\vec{\bU},\vec{\bV})\,,
\end{align*}
where \(d^{p,q}\) is either a dissimilarity function of type \ref{tPD:D} or \ref{tMPD:D} according to \eqref{Diss.MOP:Represent.Cop}, \eqref{Diss.MOP:Represent.DCop} or \eqref{Diss.MOP:Represent.DCopAve}. 
\end{corollary}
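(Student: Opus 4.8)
The plan is to reduce the asserted invariance of $d^{p,q}$ first to an invariance of the coefficient $\kappa^{q|p}$, and then, by unwinding the definitions \eqref{Def.MoP} and \eqref{Tq}, to the already established invariance properties of Azadkia \& Chatterjee's coefficient $T$. First I would note that each of the relevant dissimilarity functions — in any of the representations \eqref{Diss.MOP:Represent.Cop}, \eqref{Diss.MOP:Represent.DCop}, \eqref{Diss.MOP:Represent.DCopAve} — has the form
\begin{align*}
  d^{p,q}(\vec{\bX},\vec{\bY}) = A\big(\kappa^{p|q}(\vec{\bX}|\vec{\bY}),\,\kappa^{q|p}(\vec{\bY}|\vec{\bX})\big)
\end{align*}
for a fixed function $A$ (built from a copula $C$, its dual $C^\ast$, or the arithmetic mean, possibly composed with $u\mapsto 1-u$ in each slot). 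Hence it suffices to prove the two identities
\begin{align*}
  \kappa^{q|p}(\vec{\bY}|\vec{\bX}) = \kappa^{q|p}(\vec{\bV}|\vec{\bU})\,,
  \qquad
  \kappa^{p|q}(\vec{\bX}|\vec{\bY}) = \kappa^{p|q}(\vec{\bU}|\vec{\bV})\,,
\end{align*}
where $\vec{\bU}=(F_{X_1}(X_1),\dots,F_{X_p}(X_p))$ and $\vec{\bV}=(F_{Y_1}(Y_1),\dots,F_{Y_q}(Y_q))$ are taken in the same coordinate order as $\vec{\bX}$ and $\vec{\bY}$ (any other vector representation of $\bU$, $\bV$ yields the same value by the permutation invariance \ref{tSym:D}, already available). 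By symmetry of the two identities I would only treat the first.

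Next I would use that $\kappa^{q|p}$ is, by \eqref{Def.MoP}, an average over $S_q$ of the quantities $T^q(\sigma(\vec{\bY})|\vec{\bX})$; since permuting the coordinates of $\vec{\bY}$ commutes with applying the coordinate-wise transforms $F_{Y_i}$, it is enough to show $T^q(\vec{\bY}|\vec{\bX}) = T^q(\vec{\bV}|\vec{\bU})$ for a fixed coordinate order. By \eqref{Tq}, $T^q(\vec{\bY}|\vec{\bX})$ is one and the same rational expression in the numbers $T(Y_i\mid(\vec{\bX},Y_{i-1},\dots,Y_1))$ and $T(Y_i\mid(Y_{i-1},\dots,Y_1))$, $i=1,\dots,q$. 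So the whole statement reduces to the claim that each of these terms is unchanged when the response $Y_i$ is replaced by $V_i=F_{Y_i}(Y_i)$ and, simultaneously, each conditioning coordinate $X_j$ by $U_j=F_{X_j}(X_j)$ and each conditioning coordinate $Y_\ell$ ($\ell<i$) by $V_\ell$. That is precisely the invariance of $T$ under passing the response variable and every coordinate of the conditioning vector to their individual distributional transforms — the content of \cite[Corollaries 2.5 and 2.6]{ansari_simple_2023}, which in turn rest on the invariance of $T$ under coordinate-wise strictly increasing transformations from \cite{azadkia_simple_2021}. Substituting these termwise equalities back into \eqref{Tq} gives $T^q(\vec{\bY}|\vec{\bX})=T^q(\vec{\bV}|\vec{\bU})$, then \eqref{Def.MoP} gives the $\kappa$-identity, and applying $A$ to both sides finishes the proof.

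The one genuinely delicate point is the termwise invariance of $T$ itself: the individual distributional transforms $F_{X_j}$, $F_{Y_\ell}$ need not be strictly increasing bijections when the underlying variable has atoms, so this cannot be obtained by an elementary change of variables in \eqref{T}; it is exactly the nontrivial input one must borrow from \cite[Corollaries 2.5 and 2.6]{ansari_simple_2023}. Everything else is bookkeeping: propagating that single fact through the nested constructions \eqref{Tq} and \eqref{Def.MoP}, and then through the outer aggregation function $A$.
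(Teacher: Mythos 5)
Your proposal is correct and follows essentially the same route as the paper, which simply observes that the corollary is an immediate consequence of \cite[Corollaries 2.5 and 2.6]{ansari_simple_2023} and the fact that each $d^{p,q}$ is a fixed aggregation $A$ of the two $\kappa$-values. Your additional care in unwinding \eqref{Def.MoP} and \eqref{Tq} down to the termwise invariance of $T$, and your remark that the distributional transforms need not be strictly increasing bijections (so Corollary \ref{DF.Cor.InvBijT} cannot be used instead), correctly identify where the cited result does the real work.
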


The next result is due to \cite[Theorem 2.10]{ansari_simple_2023} and demonstrates that the value of a dissimilarity function remains unchanged when transforming the variables within $\bX$ and those within $\bY$ by strictly increasing and bijective transformations.

\begin{corollary}[Invariance under strictly increasing bijective transformations]~~\label{DF.Cor.InvBijT}\\
Let \(g_i, h_k: \bR \mapsto \bR\), \(i \in \{1,\dots,p\}, k \in \{1,\dots,q\}\), 
be strictly increasing and bijective transformations, 
and let \(\bU\) and \(\bV\) denote the sets of transformed variables, i.e.,
\(\bU = \{g_1(X_1), \dots, g_p(X_p)\}\) and \(\bV = \{h_1(Y_1), \dots, h_q(Y_q)\}\).
Then
\begin{align*}
  d^{p,q} (\vec{\bX},\vec{\bY})
  = d^{p,q} (\vec{\bU},\vec{\bV})\,,
\end{align*}
where \(d^{p,q}\) is either a dissimilarity function of type \ref{tPD:D} or \ref{tMPD:D} according to \eqref{Diss.MOP:Represent.Cop}, \eqref{Diss.MOP:Represent.DCop} or \eqref{Diss.MOP:Represent.DCopAve}. 
\end{corollary}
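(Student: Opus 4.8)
The plan is to reduce the claim directly to the corresponding invariance property of the building block $\kappa^{q|p}$ and then exploit the explicit functional form of the two recommended dissimilarity functions $d^{p,q}_{\Pi}$ and $d^{p,q}_{\mathrm{ave}}$. The key observation is that both are, via \eqref{Diss.MOP:Represent.Cop} and \eqref{Diss.MOP:Represent.DCopAve}, deterministic functions of the pair $\bigl(\kappa^{p|q}(\vec{\bX}|\vec{\bY}),\,\kappa^{q|p}(\vec{\bY}|\vec{\bX})\bigr)$: explicitly $d^{p,q}_{\Pi}(\vec{\bX},\vec{\bY}) = (1-\kappa^{p|q}(\vec{\bX}|\vec{\bY}))(1-\kappa^{q|p}(\vec{\bY}|\vec{\bX}))$ and $d^{p,q}_{\mathrm{ave}}(\vec{\bX},\vec{\bY}) = 1 - \tfrac12(\kappa^{p|q}(\vec{\bX}|\vec{\bY})+\kappa^{q|p}(\vec{\bY}|\vec{\bX}))$. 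Hence it suffices to show that each of the two quantities $\kappa^{p|q}(\vec{\bX}|\vec{\bY})$ and $\kappa^{q|p}(\vec{\bY}|\vec{\bX})$ is left unchanged when $\vec{\bX}$ is replaced by $\vec{\bU}=(g_1(X_1),\dots,g_p(X_p))$ and $\vec{\bY}$ by $\vec{\bV}=(h_1(Y_1),\dots,h_q(Y_q))$ with all $g_i,h_k$ strictly increasing bijections of $\bR$.

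First I would recall that $\kappa^{q|p}$ was defined in \eqref{Def.MoP} as a symmetrized average over $S_q$ of the simple extension $T^q$, which in turn is assembled (through \eqref{Tq}) from the coefficients $T(Y_i\mid(\vec{\bX},Y_{i-1},\dots,Y_1))$ and $T(Y_i\mid(Y_{i-1},\dots,Y_1))$ of Azadkia \& Chatterjee. By \cite[Theorem 2.10]{ansari_simple_2023} the coefficient $T^q(\vec{\bY}\mid\vec{\bX})$ — and therefore $\kappa^{q|p}(\vec{\bY}|\vec{\bX})$, being an average of permuted copies of it — is invariant under componentwise strictly increasing bijective transformations of the coordinates of both $\vec{\bX}$ and $\vec{\bY}$; this is precisely because $T(Y\mid\vec{\bX})$ as given in \eqref{T} depends on the conditioning variables only through the $\sigma$-field they generate (which is unaffected by bijections) and on $Y$ only through level sets $\{Y\ge y\}$ together with the law $\bP^Y$, both of which transform covariantly under a strictly increasing bijection of $Y$. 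Applying this with the roles of $\bX$ and $\bY$ interchanged gives the same invariance for $\kappa^{p|q}(\vec{\bX}|\vec{\bY})$.

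Combining the two, the ordered pair $\bigl(\kappa^{p|q}(\vec{\bX}|\vec{\bY}),\,\kappa^{q|p}(\vec{\bY}|\vec{\bX})\bigr)$ equals $\bigl(\kappa^{p|q}(\vec{\bU}|\vec{\bV}),\,\kappa^{q|p}(\vec{\bV}|\vec{\bU})\bigr)$, and plugging this equality into the closed forms above yields $d^{p,q}_{\Pi}(\vec{\bX},\vec{\bY}) = d^{p,q}_{\Pi}(\vec{\bU},\vec{\bV})$ and $d^{p,q}_{\mathrm{ave}}(\vec{\bX},\vec{\bY}) = d^{p,q}_{\mathrm{ave}}(\vec{\bU},\vec{\bV})$. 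The same argument works verbatim for the general copula-based forms \eqref{Diss.MOP:Represent.Cop} and \eqref{Diss.MOP:Represent.DCop}, so the corollary holds for every dissimilarity function of the stated families.

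I do not anticipate a genuine obstacle here: the statement is essentially a transfer of \cite[Theorem 2.10]{ansari_simple_2023} through the elementary representations \eqref{Diss.MOP:Represent.Cop}--\eqref{Diss.MOP:Represent.DCopAve}. The only point requiring a modicum of care is the bijectivity hypothesis on the $g_i,h_k$: strict monotonicity alone suffices for the invariance of the numerator-type integrals, but bijectivity is what guarantees that the pushforward laws $\bP^{g_i(X_i)}$, $\bP^{h_k(Y_k)}$ remain non-degenerate and that no information is lost at the level of the generated $\sigma$-fields, so the cited theorem applies cleanly; this is exactly why the hypothesis is imposed. $\hfill\Box$
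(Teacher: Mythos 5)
Your proposal is correct and follows the same route the paper takes: the paper simply attributes the corollary to \cite[Theorem 2.10]{ansari_simple_2023} and relies on the fact that \(d^{p,q}_{\Pi}\), \(d^{p,q}_{\mathrm{ave}}\) and the other copula-based forms are deterministic functions of the pair \(\bigl(\kappa^{p|q}(\vec{\bX}|\vec{\bY}),\kappa^{q|p}(\vec{\bY}|\vec{\bX})\bigr)\), which is exactly your reduction. Your additional remarks on why the cited invariance holds and on the role of bijectivity are consistent with the paper but not needed beyond the citation.
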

\noindent In a nutshell, Corollary \ref{DF.Cor.InvBijT} confirms that
the concept of dissimilarity considered here is dependence-based.

\bigskip
Although, according to Theorem \ref{MainThmDF}, the values \(0\) and \(1\) of the dissimilarity functions \(d^{p,q}_{\Pi}\) and \(d^{p,q}_{\mathrm{ave}}\) have a clear interpretation,
the meaning of \(d^{p,q}_{\Pi}\) and \(d^{p,q}_{\mathrm{ave}}\) taking values in the interval \((0,1)\) is not specified. 
This justifies the investigation of modes of convergence that are compatible with the introduced dissimilarity functions.
\\
Since \(\kappa^{q|p}\) depends on conditional distributions, convergence in distribution of the sequence \((\vec{\bX}_n,\vec{\bY}_n)_{n\in \bN}\) is not sufficient for the convergence of \((d^{p,q}_{\Pi} (\vec{\bX},\vec{\bY}))_{n\in \bN}\) and \((d^{p,q}_{\mathrm{ave}} (\vec{\bX},\vec{\bY}))_{n\in \bN}\); see \cite[Section 4]{ansari_simple_2023}.
A promising candidate for achieving a continuity statement has been presented in \cite[Theorem 4.1 and Corollary 4.3]{ansari_simple_2023} where the authors showed that \(T^q\) (and consequently \(\kappa^{q|p}\)) is continuous with respect to the notion of conditional weak convergence going back to \cite{Sweeting_1989}.
Applying these results, we show that the here considered dissimilarity functions are continuous in classes of elliptical and $l_1$-norm symmetric distributions.

For \(d\in \bN\,,\) denote by \(\mathcal{U}(\bR^d)\) a class of bounded, continuous, weak convergence-determining functions mapping from \(\bR^d\) to \(\mathbb{C}\). 
Denote by \(\xrightarrow{ d }\) convergence in distribution. 
A sequence \((f_n)_{n\in \bN}\) of functions mapping from \(\bR^d\) to \(\mathbb{C}\) is said to be \emph{asymptotically equicontinuous} on an open set \(V\subset \bR^d\,,\) if for all \(\varepsilon>0\) and \(\xx \in V\) there exist \(\delta(\xx,\varepsilon)>0\) and \(n(\xx,\varepsilon)\in \bN\) such that whenever \(|\yy-\xx|\leq \delta(\xx,\varepsilon)\) then \(|f_n(\yy)-f_n(\xx)|<\varepsilon\) for all \(n > n(\xx,\varepsilon)\,.\) 
Further, \((f_n)_{n\in \bN}\) is said to be \emph{asymptotically uniformly equicontinuous} on \(V\) if it is asymptotically equicontinuous on \(V\) and the constants \(\delta(\varepsilon)=\delta(\xx,\varepsilon)\) and \(n(\varepsilon)=n(\xx,\varepsilon)\) do not depend on \(\xx\).

\begin{theorem}[Continuity, general result]~~\label{Thm:Cont} \\
Consider the $(p+q)$-dimensional random vector $(\vec{\bX},\vec{\bY})$ and a sequence $(\vec{\bX}_n,\vec{\bY}_n)_{n\in \mathbb{N}}$ of $(p+q)$-dimensional random vectors. 
Let $V_{1} \subset \mathbb{R}^{p}$ and $W_{1} \subset \mathbb{R}^{q}$ be open such that $\mathbb{P}(\vec{\bX} \in V_{1})=1$ and $\mathbb{P}(\vec{\bY} \in W_{1})=1$.
For each choice of permutations \(\sigma \in S_p\) and all $i\in \{2, \dots ,p\}$, 
let further $O_{i} \subset \mathbb{R}^{i-1}$ and $W_{i} \subset \mathbb{R}^{q+i-1}$ be open such that $\mathbb{P}((X_{\sigma_1},\dots,X_{\sigma_{i-1}}) \in O_{i}) =1$ and 
$\mathbb{P}((\vec{\bY},X_{\sigma_1},\dots,X_{\sigma_{i-1}}) \in W_{i})=1$,
and for each choice of permutations \(\tau \in S_q\) and all $j\in \{2,\dots ,q\}$, 
let $U_{j} \subset \mathbb{R}^{j-1}$ and $V_{j} \subset \mathbb{R}^{p+j-1}$ be open such that $\mathbb{P}((Y_{\tau_1} ,\dots,Y_{\tau_{j-1}}) \in U_{j})=1$ and 
$\mathbb{P}((\vec{\bX},Y_{\tau_1} ,\dots,Y_{\tau_{j-1}}) \in V_{j})=1$. 
If, further
\begin{itemize}
\item 
$(\vec{\bX}_{n}, \vec{\bY}_{n}) \xrightarrow{d} (\vec{\bX},\vec{\bY})$, 

\item 
for all \(u \in \mathcal{U}(\bR)\),
$(\mathbb{E}[u(X_{\sigma_{1},n}) | \vec{\bY}_{n} = \yy])_{n\in \mathbb{N}}$ is asymptotically equicontinuous on $W_{1}$ and, for all $i\in \{2, \dots ,p\}$, 
$(\mathbb{E}[u( X_{\sigma_i,n}) | (\vec{\bY}_{n}, X_{\sigma_1,n} ,\dots,X_{\sigma_{i-1},n}) = \zz])_{n\in \mathbb{N}}$ is asymptotically equicontinuous on $W_{i}$ and 
$(\mathbb{E}[ u(X_{\sigma_i,n}) | ( X_{\sigma_1,n},\dots,X_{\sigma_{i-1},n}) = \xx])_{n\in \mathbb{N}}$ is asymptotically equicontinuous on $O_{i}$, 

\item 
for all \(u \in \mathcal{U}(\bR)\), 
$(\mathbb{E}[u(Y_{\tau_{1},n}) | \vec{\bX}_{n} = \xx])_{n\in \mathbb{N}}$ is asymptotically equicontinuous on $V_{1}$ and, for all $j\in \{2,\dots ,q\}$, 
$(\mathbb{E}[u(Y_{\tau_j,n}) | (\vec{\bX}_{n},Y_{\tau_1,n}, \dots, Y_{\tau_{j-1},n}) = \zz])_{n\in \mathbb{N}}$ is asymptotically equicontinuous on $V_{j}$ and
$(\mathbb{E}[u(Y_{\tau_j,n}) | (Y_{\tau_1,n}, \dots, Y_{\tau_{j-1},n}) = \yy])_{n\in \mathbb{N}}$ is asymptotically equicontinuous on $U_{j}$, and

\item $F_{X_{i,n}} \circ F_{X_{i,n}}^{-1}( t)\xrightarrow{n\rightarrow \infty } F_{X_{i}} \circ F_{X_{i}}^{-1}( t)$ and $F_{Y_{j,n}} \circ F_{Y_{j,n}}^{-1}( t)\xrightarrow{n\rightarrow \infty } F_{Y_{j}} \circ F_{Y_{j}}^{-1}( t)$ for $\lambda $-almost all $t\in ( 0,1)$ and for all $i\in \{1, \dots ,p\}$ and $j\in \{1,\dots,q\}$, 
\end{itemize}
then $$d^{p,q}(\vec{\bX}_{n} ,\vec{\bY}_{n})\xrightarrow{n\rightarrow \infty } d^{p,q}(\vec{\bX} ,\vec{\bY})\,,$$ where \(d^{p,q}\) is either a dissimilarity function of type \ref{tPD:D} or \ref{tMPD:D} according to \eqref{Diss.MOP:Represent.Cop}, \eqref{Diss.MOP:Represent.DCop} or \eqref{Diss.MOP:Represent.DCopAve}. 
\end{theorem}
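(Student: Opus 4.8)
The plan is to reduce the claim to the continuity of Azadkia \& Chatterjee's coefficient $T$ and of its multi-outcome extension $T^q$ with respect to conditional weak convergence, which is already available from \cite[Theorem 4.1, Corollary 4.3]{ansari_simple_2023}, and then to reassemble $d^{p,q}$ from these pieces. Concretely I would proceed through three reductions (functional form, finite averaging, single coefficients) followed by one application of the cited continuity result.

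Each of the candidates in \eqref{Diss.MOP:Represent.Cop}, \eqref{Diss.MOP:Represent.DCop} and \eqref{Diss.MOP:Represent.DCopAve} has the form
\[
d^{p,q}(\vec{\bX},\vec{\bY}) = g\big(\kappa^{p|q}(\vec{\bX}|\vec{\bY}),\,\kappa^{q|p}(\vec{\bY}|\vec{\bX})\big),
\]
with $g\colon[0,1]^2\to[0,1]$ a fixed continuous function ($g(u,v)=C(1-u,1-v)$ in case \ref{tPD:D}; $g(u,v)=1-C(u,v)$ or $g(u,v)=1-\tfrac{u+v}{2}$ in case \ref{tMPD:D}), continuous on the compact square since copulas are (Lipschitz) continuous. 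Hence it suffices to show $\kappa^{p|q}(\vec{\bX}_n|\vec{\bY}_n)\to\kappa^{p|q}(\vec{\bX}|\vec{\bY})$ and $\kappa^{q|p}(\vec{\bY}_n|\vec{\bX}_n)\to\kappa^{q|p}(\vec{\bY}|\vec{\bX})$; by the symmetric roles of the two vectors I treat only the latter. Since $\kappa^{q|p}(\vec{\bY}_n|\vec{\bX}_n)=\frac1{q!}\sum_{\tau\in S_q}T^q(\tau(\vec{\bY}_n)|\vec{\bX}_n)$ is a \emph{finite} average, it is enough to prove $T^q(\tau(\vec{\bY}_n)|\vec{\bX}_n)\to T^q(\tau(\vec{\bY})|\vec{\bX})$ for each fixed $\tau\in S_q$. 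Writing $T^q(\tau(\vec{\bY}_n)|\vec{\bX}_n)=1-N_n/D_n$ with
\begin{align*}
N_n &= q-\sum_{i=1}^q T\big(Y_{\tau_i,n}\,\big|\,(\vec{\bX}_n,Y_{\tau_1,n},\dots,Y_{\tau_{i-1},n})\big),\\
D_n &= q-\sum_{i=1}^q T\big(Y_{\tau_i,n}\,\big|\,(Y_{\tau_1,n},\dots,Y_{\tau_{i-1},n})\big),
\end{align*}
and noting that $T(Y_{\tau_1,n}|\emptyset)=0$ together with $T\in[0,1]$ forces $D_n\ge 1$ (and $D\ge1$ in the limit), the whole problem comes down to convergence of each of the finitely many terms $T(Y_{\tau_i,n}|\cdot)$ occurring in $N_n,D_n$ and, symmetrically, of the terms $T(X_{\sigma_i,n}|\cdot)$ for $\sigma\in S_p$ coming from $\kappa^{p|q}$.

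For this last point I would invoke \cite[Theorem 4.1, Corollary 4.3]{ansari_simple_2023}: $T$ (hence $T^q$) is continuous along sequences whose relevant conditional laws converge conditionally weakly in the sense of \cite{Sweeting_1989} and whose marginal normalizations converge. For each conditioning above, conditional weak convergence is obtained from Sweeting's criterion, namely the joint weak convergence of the sub-vector concerned — e.g.\ $(\vec{\bX}_n,Y_{\tau_1,n},\dots,Y_{\tau_j,n})\xrightarrow{d}(\vec{\bX},Y_{\tau_1},\dots,Y_{\tau_j})$, which follows from $(\vec{\bX}_n,\vec{\bY}_n)\xrightarrow{d}(\vec{\bX},\vec{\bY})$ by projection — together with asymptotic equicontinuity of the conditional expectations $\zz\mapsto\bE[u(Y_{\tau_j,n})|(\vec{\bX}_n,Y_{\tau_1,n},\dots,Y_{\tau_{j-1},n})=\zz]$, and of the analogous maps on $U_j$, $V_j$, $O_i$, $W_i$, for $u$ in the convergence-determining class $\mathcal{U}(\bR)$; this is precisely the content of the second and third bullets. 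The marginal-normalization requirement is exactly the fourth bullet, $F_{X_{i,n}}\circ F_{X_{i,n}}^{-1}(t)\to F_{X_i}\circ F_{X_i}^{-1}(t)$ and $F_{Y_{j,n}}\circ F_{Y_{j,n}}^{-1}(t)\to F_{Y_j}\circ F_{Y_j}^{-1}(t)$ for $\lambda$-a.e.\ $t$, which controls the reference measure $\de\bP^Y$ and the quantity $\int_\bR\var(\1_{\{Y\ge y\}})\,\de\bP^Y(y)$ appearing in \eqref{T}. Feeding all these convergences into the ratio $1-N_n/D_n$ yields $T^q(\tau(\vec{\bY}_n)|\vec{\bX}_n)\to T^q(\tau(\vec{\bY})|\vec{\bX})$; averaging over $\tau$ and over $\sigma$ gives convergence of both $\kappa$'s, and continuity of $g$ concludes.

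The main obstacle I anticipate is the bookkeeping in the last reduction. The coefficient $T^q$ conditions the $j$-th coordinate of $\vec{\bY}$ on the \emph{enlarged} vector $(\vec{\bX},Y_{\tau_1},\dots,Y_{\tau_{j-1}})$, so a separate conditional-weak-convergence statement — hence a separate invocation of Sweeting's theorem — is needed at every level of this nesting and for every permutation $\tau\in S_q$ and $\sigma\in S_p$; one must verify that the full-measure open sets $V_1,W_1,O_i,W_i,U_j,V_j$ in the hypotheses are exactly the domains on which asymptotic equicontinuity has to be assumed so that \cite[Theorem 4.1]{ansari_simple_2023} applies level by level. A secondary point worth spelling out is that asymptotic equicontinuity for \emph{scalar} test functions $u\in\mathcal{U}(\bR)$ genuinely suffices, because $T$ only involves the conditional law of a single real coordinate and weak convergence of such a law is detected by a one-dimensional convergence-determining family. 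Once this translation into the framework of \cite{Sweeting_1989,ansari_simple_2023} is in place, the remaining steps are routine.
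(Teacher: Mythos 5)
Your proposal is correct and follows essentially the same route as the paper: the paper's proof is a one-line reduction to the continuity of $\kappa^{q|p}$ and $\kappa^{p|q}$ via \cite[Corollary 4.3]{ansari_simple_2023} combined with the continuity of the aggregating copula (respectively the average). Your additional unpacking of the permutation average, the ratio $1-N_n/D_n$, and the level-by-level application of Sweeting's criterion is exactly the content that the cited corollary already encapsulates.
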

\begin{proof}
Continuity of \(d^{p,q}\) can be deduced from continuity of \(\kappa^{q|p}\) and \(\kappa^{p|q}\) (see \cite[Corollary 4.3]{ansari_simple_2023}) and the fact that every copula is continuous and the average of continuous function is also continuous.
\end{proof}

As a direct consequence of Theorem \ref{Thm:Cont}, we conclude that, for elliptical distributions, 
the dissimilarity functions \(d^{p,q}_{\Pi}\) and \(d^{p,q}_{\mathrm{ave}}\) are continuous in the scale matrix and the radial part: 
\\
A random vector \((\vec{\bX},\vec{\bY})\) is said to be \emph{elliptically distributed} for some vector \({\boldsymbol \mu}\in \bR^{p+q}\,,\) 
some positive semi-definite matrix \(\Sigma=(\sigma_{ij})_{1\leq i,j\leq p+q}\,,\) and some generator \(\phi\colon \bR_+ \to \bR\), \((\vec{\bX},\vec{\bY}) \sim \mathcal{E}({\boldsymbol \mu},\Sigma,\phi)\) for short, 
if the characteristic function of \((\vec{\bX},\vec{\bY})-{\boldsymbol \mu}\) equals \(\phi\) applied to the quadratic form \({\bf t}^T\Sigma {\bf t}\,,\) i.e., \(\varphi_{(\vec{\bX},\vec{\bY})-{\boldsymbol \mu}}({\bf t})=\phi({\bf t}^T\Sigma {\bf t})\) for all \({\bf t}\in \bR^{p+q}\,.\) 
For example, if \(\phi(u)=\exp(-u/2)\,,\) then \((\vec{\bX},\vec{\bY})\) is multivariate normally distributed with mean vector \({\boldsymbol \mu}\) and covariance matrix \(\Sigma\,.\) 
Elliptical distributions have a stochastic representation 
\begin{align*}
(\vec{\bX},\vec{\bY}) \eqd \boldsymbol{\mu} + R A \UU^{(k)}\,,
\end{align*}
where \(R\) is a non-negative random variable, \(A^TA=\Sigma\) is a full rank factorization of \(\Sigma\,,\) and where \(\UU^{(k)}\) is a uniformly on the unit sphere in \(\bR^k\) distributed random variable with \(k=\rank(\Sigma)\,;\) 
see \cite{Cambanis-1981} and \cite{Fang-1990} for more information on elliptical distributions. 
\\
Note that the dissimilarity functions of type \ref{tPD:D} and \ref{tMPD:D} are location-scale invariant (Corollary \ref{DF.Cor.InvBijT}), and thus, neither depend on the centrality parameter \(\boldsymbol{\mu}\) nor on componentwise scaling factors.

\begin{corollary}[Continuity for elliptical distributions]~~\label{Thm:Cont.Ell}\\
Consider the  \((p+q)\)-dimensional elliptically distributed random vectors \((\vec{\bX}_n,\vec{\bY}_n) \sim \mathcal{E}(\boldsymbol{\mu}_n,\Sigma_n,\phi_n)\), 
\(n\in \bN\), and \((\vec{\bX},\vec{\bY}) \sim \mathcal{E} (\boldsymbol{\mu},\Sigma,\phi)\), and assume \(\Sigma_{n}\), \(n\in \bN\), and \(\Sigma\) are positive definite. 
If \(\Sigma_n \to \Sigma\) and if either
\begin{enumerate}
\item  \(\phi_n =\phi\) for all \(n \in \mathbb{N}\) and if the radial part \(R\) associated with \(\phi\) has a continuous distribution function, or
\item \(\phi_n(u) \to \phi(u)\) for all \(u\geq 0\) and if the radial variable \(R_n\) associated with \(\phi_n\) has a density \(f_n\) such that
\begin{enumerate}
\item \((f_n)_{n\in \bN}\) is asymptotically uniformly equicontinuous on \((0,\infty)\,,\)
\item \((f_n)_{n\in \bN}\) is pointwise bounded, 
\end{enumerate}
\end{enumerate}
then $d^{p,q}(\vec{\bX}_{n} ,\vec{\bY}_{n})\xrightarrow{n\rightarrow \infty } d^{p,q}(\vec{\bX} ,\vec{\bY})$, where \(d^{p,q}\) is either a dissimilarity function of type \ref{tPD:D} or \ref{tMPD:D} according to \eqref{Diss.MOP:Represent.Cop}, \eqref{Diss.MOP:Represent.DCop} or \eqref{Diss.MOP:Represent.DCopAve}. 
\end{corollary}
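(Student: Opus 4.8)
The strategy is to reduce Corollary~\ref{Thm:Cont.Ell} to the general continuity result Theorem~\ref{Thm:Cont} by verifying, in the elliptical setting, each of the four bullet-point hypotheses appearing there. The whole argument rests on the well-known structure of elliptical distributions: any subvector and any conditional distribution of an elliptical vector is again elliptical (with explicitly computable parameters obtained via the Schur complement of the relevant block of $\Sigma$), and the conditional mean is an affine function of the conditioning variables while the conditional scale matrix does not depend on them. Since $\Sigma$ and all $\Sigma_n$ are assumed positive definite, every block and every Schur complement involved is invertible, so all these conditional parameters depend continuously on $\Sigma_n$, and $\Sigma_n\to\Sigma$ transfers to convergence of each conditional parameter.

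First I would record convergence in distribution $(\vec{\bX}_n,\vec{\bY}_n)\xrightarrow{d}(\vec{\bX},\vec{\bY})$: under hypothesis~(1) this is immediate since $\Sigma_n\to\Sigma$, $\boldsymbol\mu_n$ may be taken fixed by location invariance, and $\phi_n=\phi$; under hypothesis~(2) it follows because $\phi_n(u)\to\phi(u)$ pointwise yields convergence of characteristic functions $\phi_n({\bf t}^T\Sigma_n{\bf t})\to\phi({\bf t}^T\Sigma{\bf t})$ by continuity of $\phi$, hence L\'evy continuity applies. Next I would address the last bullet, concerning the quantile-type maps $F_{X_{i,n}}\circ F_{X_{i,n}}^{-1}$: each $X_{i,n}$ is univariate elliptical with a generator derived from $\phi_n$, hence has a distribution function that is continuous (under~(1), because $R$ has a continuous law; under~(2), because $R_n$ has a density), so $F_{X_{i,n}}\circ F_{X_{i,n}}^{-1}=\mathrm{id}$ on the relevant range and the condition holds trivially; the same applies to the $Y_{j,n}$.

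The bulk of the work is the two middle bullets, the asymptotic equicontinuity of the conditional expectations $\mathbb{E}[u(\cdot_n)\mid\cdot_n=\cdot]$. Here I would exploit that, for an elliptical vector, the conditional law of one coordinate given the others is elliptical with mean $\langle a_n,\zz\rangle+b_n$ affine in the conditioning point $\zz$ (coefficients $a_n,b_n$ continuous in $\Sigma_n,\boldsymbol\mu_n$) and with a conditional generator $\tilde\phi_n$ that, under~(1), is fixed and, under~(2), converges pointwise with the associated conditional radial density inheriting asymptotic uniform equicontinuity and pointwise boundedness from $(f_n)$ (this inheritance is a standard computation: the conditional radial part is obtained from $R_n$ by a smooth transformation depending only on the — convergent — conditional quadratic form). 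Given this, $\zz\mapsto\mathbb{E}[u(\cdot_n)\mid\cdot=\zz]=\int u\bigl(\langle a_n,\zz\rangle+b_n+\sigma_n w\bigr)\,d\nu_n(w)$ is, for bounded continuous $u$, Lipschitz in $\zz$ with a constant controlled uniformly in $n$ (because $a_n\to a$, $\sigma_n\to\sigma$ are bounded and $u$ is uniformly continuous on compacts together with boundedness), which delivers asymptotic uniform equicontinuity on the required open sets; the sets $V_1,W_1,O_i,W_i,U_j,V_j$ can be taken to be the full Euclidean spaces since positive-definiteness forces every marginal to have full support. With all four bullets verified, Theorem~\ref{Thm:Cont} applies and yields $d^{p,q}(\vec{\bX}_n,\vec{\bY}_n)\to d^{p,q}(\vec{\bX},\vec{\bY})$.

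The main obstacle is the second middle-bullet verification under hypothesis~(2): one must show cleanly that the asymptotic uniform equicontinuity and pointwise boundedness of the radial densities $(f_n)$ propagate to the conditional radial densities appearing after conditioning on $i-1$ (resp.\ $q+i-1$) coordinates, uniformly over the conditioning locus. This requires writing the conditional radial density as an explicit integral transform of $f_n$ whose kernel depends on the conditional quadratic form — which is itself converging because $\Sigma_n\to\Sigma$ — and then arguing that a mild equicontinuity modulus survives this transform. I expect this to be the one genuinely technical step; the remaining verifications are routine once the elliptical conditioning formulas are in hand.
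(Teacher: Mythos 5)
Your overall strategy --- verifying the four hypotheses of Theorem~\ref{Thm:Cont} directly from the elliptical structure --- is not the route the paper takes. The paper's proof is a one-liner: it invokes \cite[Proposition 4.4]{ansari_simple_2023}, which already establishes continuity of \(T^q\) (hence of \(\kappa^{q|p}\)) for elliptical distributions under exactly the hypotheses of the corollary, and then only needs to observe that elliptical families are closed under coordinate permutations (so the permutation average defining \(\kappa^{q|p}\) is covered) and that \(C\), \(C^\ast\) and the arithmetic mean are continuous aggregators. What you propose is essentially to reprove that external proposition from scratch, which is legitimate in principle but much heavier.

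The problem is that your sketch leaves the decisive step unproved and contains one incorrect simplification. First, you assert that under hypothesis~(1) the conditional generator is ``fixed''; this is false for non-Gaussian elliptical families. Conditioning an elliptical vector on a subvector yields an elliptical law whose radial part (equivalently, whose generator) depends on the quadratic form of the conditioning point \(\zz\) --- the multivariate \(t\) is the standard example, where the conditional scale is a function of \(\zz\). Consequently your representation \(\int u(\langle a_n,\zz\rangle+b_n+\sigma_n w)\,d\nu_n(w)\) with \(\sigma_n\) and \(\nu_n\) independent of \(\zz\) does not hold, and the Lipschitz-in-\(\zz\) bound you derive from it does not follow. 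Second, the step you yourself flag as the ``main obstacle'' --- that asymptotic uniform equicontinuity and pointwise boundedness of the radial densities \((f_n)\) propagate through the conditioning transform, uniformly over the conditioning locus and over \(n\) --- is precisely the technical content of \cite[Proposition 4.4]{ansari_simple_2023} and is not routine; deferring it means the proof is not complete. As written, the argument establishes the easy bullets of Theorem~\ref{Thm:Cont} (convergence in distribution; the \(F\circ F^{-1}\) condition via continuity of the marginals) but not the two equicontinuity bullets, which is where all the difficulty of the statement lives.
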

\begin{proof}
The result is immediate from \cite[Proposition 4.4]{ansari_simple_2023}, the fact that elliptical distributions are closed under permutations, that every copula is continuous and the average of continuous function is also continuous.
\end{proof}

The following continuity result for the normal distribution is an immediate consequence of Corollary \ref{Thm:Cont.Ell}.
\begin{corollary}[Continuity for normal distributions]~\\
Consider the \((p+q)\)-dimensional normally distributed random vectors \((\vec{\bX}_n,\vec{\bY}_n) \sim \mathcal{N}(\boldsymbol{\mu}_n,\Sigma_n)\), 
\(n\in \bN\), and \((\vec{\bX},\vec{\bY}) \sim \mathcal{N}(\boldsymbol{\mu},\Sigma)\), and assume \(\Sigma_n\), \(n \in \bN\), and \(\Sigma\) are positive definite. 
Then \(\Sigma_n\to \Sigma\) implies $d^{p,q}(\vec{\bX}_{n} ,\vec{\bY}_{n})\xrightarrow{n\rightarrow \infty } d^{p,q}(\vec{\bX} ,\vec{\bY})$, where \(d^{p,q}\) is either a dissimilarity function of type \ref{tPD:D} or \ref{tMPD:D} according to \eqref{Diss.MOP:Represent.Cop}, \eqref{Diss.MOP:Represent.DCop} or \eqref{Diss.MOP:Represent.DCopAve}. 
\end{corollary}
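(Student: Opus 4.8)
The plan is to deduce this directly from Corollary~\ref{Thm:Cont.Ell}, case~1, by viewing the multivariate normal law as an elliptical distribution with a single, fixed generator. First I would recall that $(\vec{\bX},\vec{\bY}) \sim \mathcal{N}(\boldsymbol{\mu},\Sigma)$ coincides with $\mathcal{E}(\boldsymbol{\mu},\Sigma,\phi)$ for the generator $\phi(u)=\exp(-u/2)$, and likewise $(\vec{\bX}_n,\vec{\bY}_n) \sim \mathcal{N}(\boldsymbol{\mu}_n,\Sigma_n)$ equals $\mathcal{E}(\boldsymbol{\mu}_n,\Sigma_n,\phi)$ with the \emph{same} $\phi$. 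In particular $\phi_n = \phi$ for every $n$, so we are precisely in the setting of Corollary~\ref{Thm:Cont.Ell}(1), and only its remaining hypothesis on the radial part needs to be checked.

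Next I would verify that the radial variable $R$ associated with $\phi$ has a continuous distribution function. Since $\Sigma$ is positive definite we have $k=\rank(\Sigma)=p+q$, and the stochastic representation $(\vec{\bX},\vec{\bY}) \eqd \boldsymbol{\mu} + R A \UU^{(k)}$ of the normal law forces $R \eqd \|\ZZ\|$ with $\ZZ \sim \mathcal{N}(\boldsymbol{0},I_{k})$, equivalently $R^{2} \sim \chi^{2}_{k}$. As the $\chi^{2}_{k}$-law (hence the law of $R$) admits a Lebesgue density, its distribution function is continuous, so the hypothesis of Corollary~\ref{Thm:Cont.Ell}(1) is satisfied.

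Finally, since $\Sigma_n$, $n\in\bN$, and $\Sigma$ are positive definite by assumption and $\Sigma_n \to \Sigma$ is given, all the hypotheses of Corollary~\ref{Thm:Cont.Ell}(1) are in force, and the claimed convergence $d^{p,q}(\vec{\bX}_n,\vec{\bY}_n) \xrightarrow{n\to\infty} d^{p,q}(\vec{\bX},\vec{\bY})$ follows at once for each of the dissimilarity functions of type~\ref{tPD:D} and~\ref{tMPD:D} given in \eqref{Diss.MOP:Represent.Cop}, \eqref{Diss.MOP:Represent.DCop} and \eqref{Diss.MOP:Represent.DCopAve}. I do not expect any real obstacle: the argument is a specialisation of the elliptical case, and the only points meriting a line of care are the identification of the normal radial part as a $\chi$-variable and the observation that positive definiteness keeps $\rank(\Sigma_n)=\rank(\Sigma)=p+q$ constant, both of which are immediate.
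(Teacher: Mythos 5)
Your argument is correct and is exactly the route the paper takes: the corollary is stated there as an immediate consequence of Corollary~\ref{Thm:Cont.Ell}(1), with the normal law viewed as elliptical with fixed generator $\phi(u)=\exp(-u/2)$ and radial part distributed as a $\chi$-variable, hence with continuous distribution function. Your write-up merely makes these (straightforward) verifications explicit.
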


Denote by \(t_\nu(\boldsymbol{\mu},\Sigma)\) the \(d\)-variate Student-t distribution with \(\nu>0\) degrees of freedom, symmetry vector \(\boldsymbol{\mu}\in \bR^d\) and symmetric, positive semi-definite \((d\times d)\)-matrix \(\Sigma\,.\) Then \(t_\nu(\boldsymbol{\mu},\Sigma)\) belongs to the elliptical class, where the radial variable has a density of the form \(g(t)= c[1+t/\nu]^{-(\nu+d)/2}\,,\) which is Lipschitz-continuous with Lipschitz constant \(\frac{\nu+d}{2\nu}\,.\) 
Hence, the following result is an immediate consequence of Corollary \ref{Thm:Cont.Ell}.

\begin{corollary}[Continuity for Student-t distributions]~\\
Consider the \((p+q)\)-dimensional t-distributed random vectors \((\vec{\bX}_n,\vec{\bY}_n) \sim t_{\nu_n}(\boldsymbol{\mu}_n,\Sigma_n)\), 
\(n\in \bN\), and \((\vec{\bX},\vec{\bY}) \sim t_\nu(\boldsymbol{\mu},\Sigma)\), 
and assume \(\Sigma_n\), \(n \in \bN\), and \(\Sigma\) are positive definite. 
Then \(\Sigma_n\to \Sigma\) and \(\nu_n\to \nu\) implies $d^{p,q}(\vec{\bX}_{n} ,\vec{\bY}_{n})\xrightarrow{n\rightarrow \infty } d^{p,q}(\vec{\bX} ,\vec{\bY})$, where \(d^{p,q}\) is either a dissimilarity function of type \ref{tPD:D} or \ref{tMPD:D} according to \eqref{Diss.MOP:Represent.Cop}, \eqref{Diss.MOP:Represent.DCop} or \eqref{Diss.MOP:Represent.DCopAve}. 
\end{corollary}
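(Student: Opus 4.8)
The plan is to deduce this from Corollary~\ref{Thm:Cont.Ell}, specifically from its second alternative. Recall (as noted just before the statement) that the $(p+q)$-variate Student-$t$ law $t_\nu(\boldsymbol\mu,\Sigma)$ is elliptical, with generator $\phi_\nu$ depending on $\nu$ and on the dimension $d:=p+q$, and with radial density $f_{\nu}(t)=c_\nu\,[1+t/\nu]^{-(\nu+d)/2}$ on $(0,\infty)$. Since $\Sigma_n$ and $\Sigma$ are assumed positive definite with $\Sigma_n\to\Sigma$, it remains only to verify the hypotheses of Corollary~\ref{Thm:Cont.Ell}(2): that $\phi_{\nu_n}(u)\to\phi_\nu(u)$ for every $u\ge 0$, and that the radial densities $f_{\nu_n}$ are asymptotically uniformly equicontinuous on $(0,\infty)$ and pointwise bounded.

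For the last two conditions I would first observe that the normalizing constant $c_\nu$ is a continuous function of $\nu$ (a ratio of Gamma functions), so $c_{\nu_n}\to c_\nu$ and in particular $\sup_n c_{\nu_n}<\infty$; hence $f_{\nu_n}(t)\le c_{\nu_n}\le\sup_n c_{\nu_n}<\infty$ for all $t$, which gives the pointwise (indeed uniform) boundedness. Differentiating, $|f_{\nu_n}'(t)|\le c_{\nu_n}(\nu_n+d)/(2\nu_n)$, and since $\nu_n\to\nu>0$ the right-hand side is bounded in $n$; thus $(f_{\nu_n})_{n\in\bN}$ is uniformly Lipschitz on $(0,\infty)$, and choosing $\delta(\varepsilon)=\varepsilon/L$ with $L:=\sup_n c_{\nu_n}(\nu_n+d)/(2\nu_n)$ exhibits it as asymptotically uniformly equicontinuous there (this is precisely the Lipschitz bound recorded in the paragraph preceding the statement, now made uniform in $n$).

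For the convergence of the generators I would either invoke the closed form of the characteristic function of the multivariate $t$ (a modified Bessel function $K_{\nu/2}$ against a power of its argument), which is jointly continuous in $\nu$ and in the argument, so that $\phi_{\nu_n}\to\phi_\nu$ pointwise on $[0,\infty)$; or, more elementarily, note that $f_{\nu_n}\to f_\nu$ pointwise implies (by Scheff\'e) that the radial variables converge in distribution, and that the generator of an elliptical law is a continuous functional of the law of its radial part, whence again $\phi_{\nu_n}\to\phi_\nu$. With all hypotheses of Corollary~\ref{Thm:Cont.Ell}(2) in place, that corollary yields $d^{p,q}(\vec{\bX}_n,\vec{\bY}_n)\to d^{p,q}(\vec{\bX},\vec{\bY})$ for every $d^{p,q}$ of type \ref{tPD:D} or \ref{tMPD:D} as in \eqref{Diss.MOP:Represent.Cop}, \eqref{Diss.MOP:Represent.DCop} or \eqref{Diss.MOP:Represent.DCopAve}.

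The only place where genuine care is needed will be the uniform control of $c_{\nu_n}$ and of the Lipschitz constants $c_{\nu_n}(\nu_n+d)/(2\nu_n)$, which is exactly where the hypothesis $\nu>0$ (and hence $\inf_{n\ \mathrm{large}}\nu_n>0$) enters, since these bounds blow up as the degrees of freedom tend to $0$. Everything else is a routine verification, with Corollary~\ref{Thm:Cont.Ell} carrying the analytic weight.
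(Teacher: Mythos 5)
Your proposal is correct and follows exactly the route the paper takes: the paper likewise reduces the statement to Corollary~\ref{Thm:Cont.Ell}, citing only the radial density $g(t)=c[1+t/\nu]^{-(\nu+d)/2}$ and its Lipschitz continuity with constant $\tfrac{\nu+d}{2\nu}$ before declaring the result an immediate consequence. You simply make explicit the verifications (uniform boundedness of $c_{\nu_n}$, uniformity of the Lipschitz constants in $n$, and pointwise convergence of the generators $\phi_{\nu_n}\to\phi_\nu$) that the paper leaves implicit.
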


We further establish continuity within the class of \(l_1\)-norm symmetric distributions: \\
Denote by \({\bf S}_d\) a \(d\)-variate random vector that is uniformly distributed on the unit simplex \(\mathcal{S}_d = \{\xx\in \bR^d \mid \lVert \xx \rVert_1 =1\}\,.\)
A \(d\)-variate random vector \({\bf W}\) follows an \emph{\(l_1\)-norm symmetric distribution} if there exists a nonnegative random variable \(R\) independent of \({\bf S}_d\) such that \({\bf W} \eqd R\,{\bf S}_d\,.\) 
The following continuity result is immediate from \cite[Proposition 4.10]{ansari_simple_2023}:

\begin{corollary}[Continuity for \(l_1\)-norm symmetric distributions]~~\\
Consider the \((p+q)\)-dimensional \(l_1\)-norm symmetric random vectors \((\vec{\bX}_n,\vec{\bY}_n) \eqd R_n {\bf S}_{p+q}\), \(n\in \bN\), and \((\vec{\bX},\vec{\bY}) \eqd R \,{\bf S}_{p+q}\),
and assume \(F_{R_n}\) and \(F_R\) are continuous with \(F_{R_n}(0)=F_R(0)=0\) for all \(n \in \bN\).
Then \(R_n\xrightarrow{~d~} R\) implies $d^{p,q}(\vec{\bX}_{n} ,\vec{\bY}_{n})\xrightarrow{n\rightarrow \infty } d^{p,q}(\vec{\bX} ,\vec{\bY})$, where \(d^{p,q}\) is either a dissimilarity function of type \ref{tPD:D} or \ref{tMPD:D} according to \eqref{Diss.MOP:Represent.Cop}, \eqref{Diss.MOP:Represent.DCop} or \eqref{Diss.MOP:Represent.DCopAve}. 
\end{corollary}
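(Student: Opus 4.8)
The plan is to reduce the assertion to the continuity of the coefficient $\kappa$ and then to quote the corresponding statement from \cite{ansari_simple_2023}. First I would observe that, whichever of the representations \eqref{Diss.MOP:Represent.Cop}, \eqref{Diss.MOP:Represent.DCop} or \eqref{Diss.MOP:Represent.DCopAve} is in force, the value $d^{p,q}(\vec{\bX}_n,\vec{\bY}_n)$ is obtained by applying a fixed \emph{continuous} map $\Psi\colon[0,1]^2\to[0,\infty)$ to the pair $\big(\kappa^{p|q}(\vec{\bX}_n|\vec{\bY}_n),\,\kappa^{q|p}(\vec{\bY}_n|\vec{\bX}_n)\big)$, namely $\Psi(s,t)=C(1-s,1-t)$, or $\Psi(s,t)=1-C(s,t)$, or $\Psi(s,t)=1-(s+t)/2$. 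Since every copula is (Lipschitz-)continuous and the averaging map is trivially continuous, it suffices to establish
\[
\kappa^{p|q}(\vec{\bX}_n|\vec{\bY}_n)\xrightarrow{n\to\infty}\kappa^{p|q}(\vec{\bX}|\vec{\bY})\qquad\text{and}\qquad\kappa^{q|p}(\vec{\bY}_n|\vec{\bX}_n)\xrightarrow{n\to\infty}\kappa^{q|p}(\vec{\bY}|\vec{\bX}).
\]

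Second, I would invoke \cite[Proposition 4.10]{ansari_simple_2023}, which is tailored to exactly this situation: it shows that the simple extension $T^q$ --- and hence, after the permutation averaging \eqref{Def.MoP}, the coefficient $\kappa^{q|p}$ --- is continuous within the class of $l_1$-norm symmetric distributions under the hypotheses $R_n\xrightarrow{d}R$ and $F_{R_n},F_R$ continuous with $F_{R_n}(0)=F_R(0)=0$. To see that this proposition applies here one checks that its hypotheses propagate to all the (sub)vectors appearing in \eqref{Tq}: the family of $l_1$-norm symmetric laws is closed under permutations of the coordinates and under passing to subvectors --- a subvector of $R\,{\bf S}_{p+q}$ is again $l_1$-norm symmetric, its radial part being the product of $R$ with an independent beta factor --- and $R_n\xrightarrow{d}R$ together with continuity of the radial distribution functions carries over to these modified radial variables. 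In particular, because $F_R(0)=0$ forces $R>0$ almost surely and the simplex-uniform marginals are atomless, every one-dimensional marginal $X_i$ (and $Y_j$) has a continuous distribution, so that $F_{X_{i,n}}\circ F_{X_{i,n}}^{-1}(t)=t=F_{X_i}\circ F_{X_i}^{-1}(t)$ for $\lambda$-almost all $t\in(0,1)$; hence the last hypothesis of Theorem \ref{Thm:Cont} is met for free.

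Combining the two limits from the first paragraph with the continuity of $\Psi$ then gives $d^{p,q}(\vec{\bX}_n,\vec{\bY}_n)\to d^{p,q}(\vec{\bX},\vec{\bY})$, as claimed. The one point that requires a little care --- and the closest thing to an obstacle --- is the bookkeeping showing that \cite[Proposition 4.10]{ansari_simple_2023} simultaneously covers \emph{both} conditioning directions $\vec{\bX}\mid\vec{\bY}$ and $\vec{\bY}\mid\vec{\bX}$; this is where one exploits the full permutation symmetry of $l_1$-norm symmetric laws, which makes the roles of $\bX$ and $\bY$ entirely interchangeable and reduces both directions to a single application of the proposition. Beyond this I expect no genuine difficulty, since the result is, as indicated, an immediate consequence of \cite[Proposition 4.10]{ansari_simple_2023}.
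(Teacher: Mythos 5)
Your proposal matches the paper's argument exactly: the paper also declares the result ``immediate from [Proposition 4.10] of Ansari--Fuchs'' combined with the continuity of the aggregating copula (respectively the averaging map), which is precisely your reduction to the continuity of $\kappa^{p|q}$ and $\kappa^{q|p}$ followed by composition with the continuous map $\Psi$. Your additional bookkeeping on permutation closure, subvectors of $l_1$-norm symmetric laws, and continuity of the marginals is a correct and welcome elaboration of details the paper leaves implicit.
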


From the above, we may conclude that the dissimilarity functions of type \ref{tPD:D} and \ref{tMPD:D} are continuous if certain conditions are satisfied.
We refer to \cite[Section 4]{ansari_simple_2023} for more results on the continuity of \(\kappa^{q|p}\) and concrete examples.

\section{Estimation}\label{Sec:Est}

In the present section estimators \(d^{p,q}_{\Pi,n}\) and \(d^{p,q}_{\mathrm{ave},n}\) for the type \ref{tPD:D} and \ref{tMPD:D} dissimilarity functions \(d^{p,q}_{\Pi}\) and \(d^{p,q}_{\mathrm{ave}}\) are introduced, both of which rely on the nearest neighbor based estimator \(\kappa^{q|p}_n\) for 
\(\kappa^{q|p}\) defined by \eqref{Def.MoP} and presented in \cite{ansari_simple_2023}.
The properties of \(\kappa^{q|p}_n\) imply strong consistency and a computation time of $O(n \log n)$ for the estimators \(d^{p,q}_{\Pi,n}\) and \(d^{p,q}_{\mathrm{ave},n}\), respectively.

\bigskip
Consider a $(p+q)$-dimensional random vector $(\vec{\bX},\vec{\bY})$ and i.i.d. copies $(\vec{\bX}_1,\vec{\bY}_1), \dots, (\vec{\bX}_n,\vec{\bY}_n)$. 
Recall that \((\vec{\bX},\vec{\bY})\) is assumed to have non-degenerate components, i.e., 
none of the random variables within \(\bX\) or \(\bY\) does follow a one-point distribution.
As estimators for \(d^{p,q}_{\Pi}\) and \(d^{p,q}_{\mathrm{ave}}\) we propose the statistics \(d^{p,q}_{\Pi,n}\) and \(d^{p,q}_{\mathrm{ave},n}\) given by
\begin{align} \label{Def.Est.}
  d^{p,q}_{\Pi,n}
  & := (1 - \kappa^{p|q}_n (\vec{\bX}|\vec{\bY})) (1 - \kappa^{q|p}_n(\vec{\bY}|\vec{\bX}))
  \\
  d^{p,q}_{\mathrm{ave},n}
  & := 1 - \frac{\kappa^{p|q}_n (\vec{\bX}|\vec{\bY}) + \kappa^{q|p}_n(\vec{\bY}|\vec{\bX})}{2}
 \end{align}
with $\kappa^{q|p}_n$ being the estimator proposed in \cite[Eq. (40)]{ansari_simple_2023} and given by the following plug-in construction principle
\begin{align} \label{Def.Est.2}
  \kappa^{q|p}_n(\vec{\bY}|\vec{\bX})
  & := \frac{1}{q!} \sum_{\sigma\in S_q} T^q_n(\sigma(\vec{\bY})|\vec{\bX})
  \\
  T^q_n(\vec{\bY}|\vec{\bX})
  &  = 1 - \frac{q - \sum_{i=1}^{q} T_n(Y_i | (\vec{\bX},Y_{i-1},\dots,Y_{1}))}{q - \sum_{i=1}^{q} T_n(Y_i | (Y_{i-1},\dots,Y_{1}))}\,,
  \notag
\end{align}
such that, for an i.i.d. sample $(\vec{\bZ}_1,Y_1), \dots, (\vec{\bZ}_n,Y_n)$ of a \((d+1)\)-dimensional random vector \((\vec{\bZ},Y)\), \(d \geq 1\),
\begin{align*}
  T_n(Y|\vec{\bZ})
  &  = \frac{\sum_{k=1}^{n} (n \, \min\{R_k,R_{N(k)}\}-L_k^2)}{\sum_{k=1}^{n} L_k \,(n - L_k)}\,,
\end{align*}
where $R_k$ denotes the rank of $Y_k$ among $Y_1, \dots, Y_n$, i.e., the number of $j$ such that $Y_j \leq Y_k$, 
and $L_k$ denotes the number of $j$ such that $Y_j \geq Y_k$ (cf. \cite{azadkia_simple_2021}).
For each $k$, the number $N(k)$ denotes the index $l$ such that $\vec{\bZ}_l$ is the nearest neighbour of $\vec{\bZ}_k$ with respect to the Euclidean metric on $\mathbb{R}^d$.
Since there may exist several nearest neighbours of $\vec{\bZ}_k$ ties are broken at random.

From the definitions of \(d^{p,q}_{\Pi,n}\) and \(d^{p,q}_{\mathrm{ave},n}\) in \eqref{Def.Est.}, 
it is immediately apparent that the estimators are built on the dimension reduction principle revealed in \cite{ansari_simple_2023} (see also \cite{chatterjee2021}) which is key to a fast estimation of \(d^{p,q}_{\Pi,n}\) and \(d^{p,q}_{\mathrm{ave},n}\).


In \cite{ansari_simple_2023}, the authors proved that \(\kappa^{q|p}_n\) in \eqref{Def.Est.2} is a strongly consistent estimator for \(\kappa^{q|p}\). As a direct consequence, we obtain strong consistency of \(d^{p,q}_{\Pi,n}\) and \(d^{p,q}_{\mathrm{ave},n}\).

\begin{theorem}[Consistency]\label{Est.Consistency}~\\
It holds that $\lim_{n \to \infty} d^{p,q}_{\Pi,n} = d^{p,q}_{\Pi}$ almost surely and $\lim_{n \to \infty} d^{p,q}_{\mathrm{ave},n} = d^{p,q}_{\mathrm{ave}}$ almost surely.
\end{theorem}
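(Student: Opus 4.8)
The plan is to derive the consistency of $d^{p,q}_{\Pi,n}$ and $d^{p,q}_{\mathrm{ave},n}$ directly from the strong consistency of $\kappa^{q|p}_n$ established in \cite{ansari_simple_2023}, exploiting the fact that both estimators are fixed continuous functions of the two scalar quantities $\kappa^{p|q}_n(\vec{\bX}|\vec{\bY})$ and $\kappa^{q|p}_n(\vec{\bY}|\vec{\bX})$. First I would recall from \cite[Theorem in Section on estimation, cf.\ Eq.\ (40)]{ansari_simple_2023} that $\kappa^{q|p}_n(\vec{\bY}|\vec{\bX}) \xrightarrow{n\to\infty} \kappa^{q|p}(\vec{\bY}|\vec{\bX})$ almost surely, and symmetrically $\kappa^{p|q}_n(\vec{\bX}|\vec{\bY}) \xrightarrow{n\to\infty} \kappa^{p|q}(\vec{\bX}|\vec{\bY})$ almost surely; note that the assumption of non-degenerate components, which is in force throughout, is exactly what guarantees the denominators in \eqref{Def.Est.2} are eventually nonzero and the estimator is well defined.

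Next I would intersect the two almost-sure events on which these convergences hold, obtaining a single event of probability one on which both $\kappa^{p|q}_n(\vec{\bX}|\vec{\bY}) \to \kappa^{p|q}(\vec{\bX}|\vec{\bY})$ and $\kappa^{q|p}_n(\vec{\bY}|\vec{\bX}) \to \kappa^{q|p}(\vec{\bY}|\vec{\bX})$. On this event, for $d^{p,q}_{\Pi,n}$ one applies the continuity of the map $(a,b)\mapsto (1-a)(1-b)$ on $[0,1]^2$ (together with the product rule for limits of real sequences) to conclude
\begin{align*}
  d^{p,q}_{\Pi,n} = (1-\kappa^{p|q}_n(\vec{\bX}|\vec{\bY}))(1-\kappa^{q|p}_n(\vec{\bY}|\vec{\bX})) \longrightarrow (1-\kappa^{p|q}(\vec{\bX}|\vec{\bY}))(1-\kappa^{q|p}(\vec{\bY}|\vec{\bX})) = d^{p,q}_{\Pi}\,,
\end{align*}
and for $d^{p,q}_{\mathrm{ave},n}$ one uses continuity of $(a,b)\mapsto 1-\tfrac{a+b}{2}$ in the same way to get convergence to $d^{p,q}_{\mathrm{ave}}$. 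Since the limiting event has probability one, this is precisely almost-sure convergence, which is the claim.

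I do not expect any genuine obstacle here: the argument is a one-line application of the continuous mapping principle for almost-sure convergence once the componentwise consistency of $\kappa^{q|p}_n$ is invoked. The only point requiring a word of care — and the closest thing to a "hard part" — is making sure the estimators $\kappa^{p|q}_n(\vec{\bX}|\vec{\bY})$ and $\kappa^{q|p}_n(\vec{\bY}|\vec{\bX})$ are simultaneously well defined for all large $n$ on the relevant event, which follows from the non-degeneracy hypothesis; beyond that, everything reduces to elementary limit arithmetic and the continuity of the aggregating functions $\Pi$ and the averaging map on the unit square.
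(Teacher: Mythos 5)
Your proposal is correct and follows exactly the route the paper takes: it cites the strong consistency of $\kappa^{q|p}_n$ from \cite{ansari_simple_2023} and then observes that $d^{p,q}_{\Pi,n}$ and $d^{p,q}_{\mathrm{ave},n}$ are continuous (indeed polynomial) functions of the two estimated coefficients, so almost-sure convergence carries over by elementary limit arithmetic on the intersection of the two probability-one events. The paper states this as a one-line ``direct consequence''; your write-up simply makes that step explicit.
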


\begin{remark}\leavevmode
\begin{enumerate}
\item 
From the properties of the estimator \(\kappa^{q|p}_n\), see \cite[Remark 5.2]{ansari_simple_2023}, it follows that \(d^{p,q}_{\Pi,n}\) and \(d^{p,q}_{\mathrm{ave},n}\) can be computed in $O(n \log n)$ time.
 
\item 
The estimators \(d^{p,q}_{\Pi,n}\) and \(d^{p,q}_{\mathrm{ave},n}\) are model-free with no tuning parameters and are consistent under sparsity assumptions.
\end{enumerate}
\end{remark}

\section{Simulation study and real data examples}
\label{Sec:Data}

In an initial simulation study (Subsection \ref{Sim:DF.AB}) 
the different candidates for type \ref{tPD:D} and type \ref{tMPD:D} dissimilarity functions are evaluated, and it is found that choosing copulas as aggregating functions that are too close to the Fr{\'e}chet-Hoeffding upper bound \(M\) or too close to the Fr{\'e}chet-Hoeffding lower bound \(W\), respectively, results in poor performance.
Therefore, we suggest using \(d^{p,q}_{\Pi}\) as type \ref{tPD:D} and $d_{\mathrm{ave}}^{p,q}$ as type \ref{tMPD:D} dissimilarity function.
The agglomerative hierarchical variable clustering methods based on \(d^{p,q}_{\Pi}\) and $d_{\mathrm{ave}}^{p,q}$ are then compared to alternative hierarchical variable clustering methods available in the literature (Subsection \ref{Sim:Sub:Comp}), 
their resilience to noise is elaborated (Subsection \ref{SubSec:Data:Noise}),
and their performance is tested when the number of random variables to be clustered is large (Subsection \ref{Subsect:Sim:Large}).

\subsection{Evaluation of type \ref{tPD:D} and type \ref{tMPD:D} dissimilarity functions} \label{Sim:DF.AB}

We start with an empirical analysis and consider a data set of bioclimatic variables for $n=1,862$ locations homogeneously distributed over the global landmass from  CHELSA (\enquote{Climatologies at high resolution for the earth’s land surface areas}, \cite{karger_climatologies_2017}).
The variables in this data set can be split into two groups: 
thermal variables and precipitation based variables (for details see Table \ref{tab:clim_var}), making the data set particularly well suited for a cluster analysis.
\\
We are interested in whether the various candidates presented in Subsection \ref{SubSec:DF.AB} for choosing a dissimilarity function (both type \ref{tPD:D} and \ref{tMPD:D}) behave comparable or whether differences can be identified.
Therefore, three thermal variables \{AMT, MTCQ, MTDQ\} and three precipitation-based variables \{AP, PCQ, PDQ\} are selected and an agglomerative hierarchical clustering is performed, for which we expect a successful partitioning into the two groups of variables \{AMT, MTCQ, MTDQ\} and \{AP, PCQ, PDQ\}.
\\
For the evaluation of type \ref{tPD:D} dissimilarity functions, we employ $d_{C}^{p,q}$ where \(C\) belongs to either the Gaussian or the Gumbel copula family for varying parameter values.
\begin{figure}[p] 
\centering
    \subfigure[$d_C^{p,q}$ with Gaussian copula \(C\) and different choices of parameter (given in brackets).]{
        \begin{minipage}[b]{1\textwidth}
        \centering
        \includegraphics[width=1\textwidth]{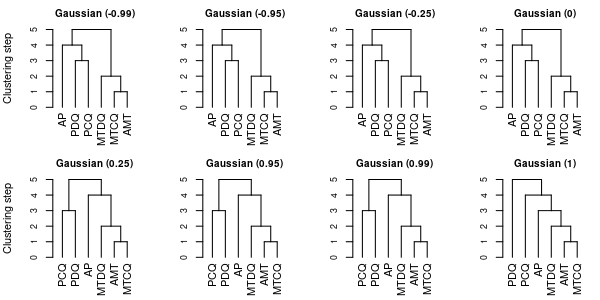} 
        \end{minipage}
    }
    \subfigure[$d_C^{p,q}$ with Gumbel copula \(C\) and different choices of parameter (given in brackets)]{
        \begin{minipage}[b]{1\textwidth}
        \centering
        \includegraphics{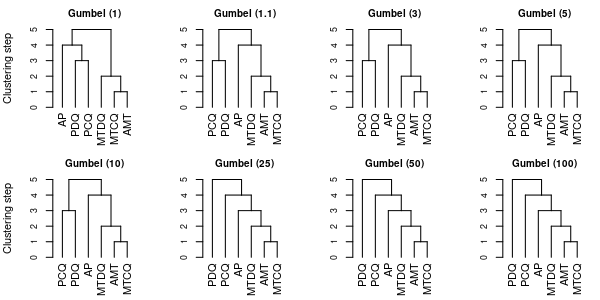} 
        \end{minipage}
    }
    \caption{Dendrograms of the hierarchical clustering in Subsection \ref{Sim:DF.AB} using type \ref{tPD:D} dissimilarity functions.}
    \label{fig:dissfunc.choice.A}
\end{figure}
From Figure \ref{fig:dissfunc.choice.A} we observe that, for either copula family, from a certain parameter threshold onwards a tendency to form chains occurs in the clustering output.
To confirm this observation, we conduct the investigation for three additional copula families (Clayton copula, Frank copula and Joe copula) and can conclude that, for each of these copula families, there is a tendency to form a chain when a higher parameter is chosen, i.e., when the aggregating copula is close to the Fr{\'e}chet-Hoeffding upper bound \(M\).
Table \ref{tab:threshold} presents, for this particular data set, thresholds for the parameter values and the Kendall's tau of the different copula families from which a tendency to form a chain emerges.
We observe that the Frank family makes the clustering result most prone to produce a chain, while the Gumbel family exhibits the least tendency for producing a chain.
\begin{table}[htbp]
  \centering
    \begin{tabular}{ccc}
    \toprule
    Copula family & Parameter & Kendall's tau \\
    \midrule
    Gaussian & 0.04  & 0.025 \\
    Gumbel   & 1.03  & 0.029 \\
    Clayton  & 0.04  & 0.020 \\
    Frank    & 0.13  & 0.014 \\
    Joe      & 1.05  & 0.028 \\
    \bottomrule
    \end{tabular}%
\caption{Threshold value for the parameter/Kendall's tau of various copula families, above which a tendency to form a chain in the dendrogram is visible for the data set analysed in Subsection \ref{Sim:DF.AB}.}
\label{tab:threshold}
\end{table}%
\\
For the evaluation of type \ref{tMPD:D} dissimilarity functions we employ $d_{C^*}^{p,q}$ with the Gaussian copula family for varying parameter values and $d^{p,q}_{\mathrm{ave}}$.
\begin{figure}[htbp] 
\centering
    \includegraphics{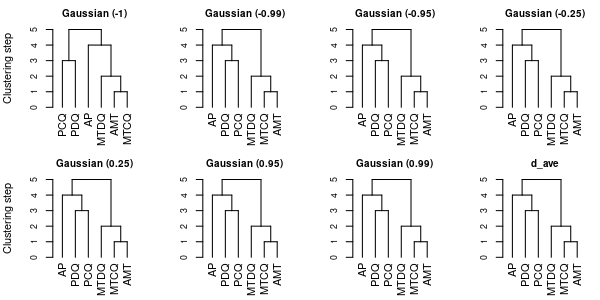} 
    \caption{Dendrograms of the hierarchical clustering in Subsection \ref{Sim:DF.AB} using type \ref{tMPD:D} dissimilarity functions $d_{C^*}^{p,q}$ (with Gaussian copula and different choices of parameter given in brackets) and $d_{\mathrm{ave}}^{p,q}$.}
    \label{fig:dissfunc.choice.B}
\end{figure} 
From Figure \ref{fig:dissfunc.choice.B} we observe that almost no difference in the clustering output for type \ref{tMPD:D} dissimilarity functions occurs.
In almost all the cases, i.e., when using $d_{C^*}^{p,q}$ with Gaussian copula \(C\) (parameter \(\geq -0.99\)) or $d_{\mathrm{ave}}^{p,q}$, the six variables are successfully partitioned into the two clusters
\{AMT, MTCQ, MTDQ\} and \{AP, PCQ, PDQ\}.
An important observation is the somewhat unsatisfactory clustering result when choosing the dissimilarity function $d_{C^*}^{p,q}$ for the Gaussian copula with parameter \(-1\), which here represents the Fr{\'e}chet-Hoeffding lower bound \(W\).

We now confirm and substantiate the observations just made about the behaviour of the various candidates for choosing a dissimilarity function through a simulation study.
Therefore, consider three sets of random variables \(\{N_1, N_2, N_3\}\), \(\{C_1, C_2, C_3\}\) and \(\{G_1, G_2, G_3\}\) distributed according to different copulas, namely Gaussian copula with Kendall's tau \(\tau = 0.15\), Clayton copula with Kendall's tau \(\tau = 0.3\) and Gumbel copula with Kendall's tau \(\tau = 0.45\). 
An agglomerative hierarchical clustering for all nine variables using different dissimilarity functions (both type \ref{tPD:D} and \ref{tMPD:D}) is performed, for which we expect a successful partitioning into the three groups. 
\begin{figure}[h] 
\centering
    \includegraphics{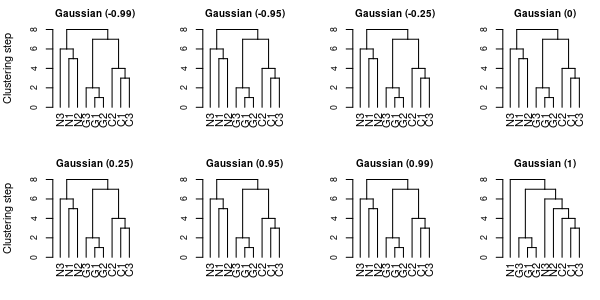}
    \caption{Dendrograms of the hierarchical clustering in Subsection \ref{Sim:DF.AB} using type \ref{tPD:D} functions $d_C^{p,q}$ with Gaussian copula for different choices of parameter (given in brackets).}
    \label{fig:dissfunc.choice.A.2}
\end{figure}
\begin{figure}[h] 
\centering
    \includegraphics{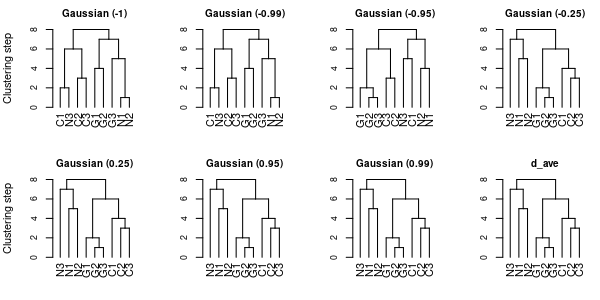}
    \caption{Dendrograms of the hierarchical clustering in Subsection \ref{Sim:DF.AB} using type \ref{tMPD:D} functions $d_{C^*}^{p,q}$ (with Gaussian copula and different choices of parameter given in brackets) and $d_{\mathrm{ave}}^{p,q}$.}
    \label{fig:dissfunc.choice.B.2}
\end{figure}
\\
From Figures \ref{fig:dissfunc.choice.A.2} and \ref{fig:dissfunc.choice.B.2} we observe that in most cases the nine variables can be successfully clustered.
Still, Figure \ref{fig:dissfunc.choice.A.2} depicts a tendency of type \ref{tPD:D} dissimilarity functions to form chains when the copula used for aggregating the two degrees of predictability is too close to (here: equals) the Fr{\'e}chet-Hoeffding upper bound \(M\).
Complementing what we observed before, the type \ref{tMPD:D} dissimilarity function \(d_{C^*}^{p,q}\), where \(C\) is a Gaussian copula, performs poorly whenever the parameter of the Gaussian copula is too small, i.e., whenever $C$ is too close to the Fr{\'e}chet-Hoeffding lower bound \(W\).
Unlike \(d_{C^*}^{p,q}\), \(d_{\mathrm{ave}}^{p,q}\) performs well on this data set: 
The nine variables are correctly clustered into three groups, where the three variables generated from the set \(\{G_1, G_2, G_3\}\) with highest Kendall's tau are clustered first, while the three variables generated from the set \(\{N_1, N_2, N_3\}\) with the lowest Kendall's tau are clustered last.

Building upon the above empirical analysis and simulation study, we conclude that for type \ref{tPD:D} dissimilarity functions it seems advisable to avoid copulas as aggregating functions that are too close to the Fr{\'e}chet-Hoeffding upper bound \(M\), otherwise the dendrogram is very likely to return a chain.
Therefore and due to its simple structure and appealing performance, we recommend (and in what follows also use) the independence copula \(\Pi\) as aggregating function and hence the dissimilarity function $d_\Pi^{p,q}$.
Instead, for type \ref{tMPD:D} dissimilarity functions, it seems advisable to avoid copulas as aggregating functions that are too close to the Fr{\'e}chet-Hoeffding lower bound \(W\), as they exhibit poor performance.
Therefore and due to its simple structure, appealing performance and the ability to characterize independence (Theorem \ref{MainThmDF}), we recommend (and in what follows also use) the dissimilarity function $d_{\mathrm{ave}}^{p,q}$.

\subsection{A simulation study comparing different hierarchical variable clustering methods}
\label{Sim:Sub:Comp}

We now discuss alternative hierarchical variable clustering methods available in the literature (based on different notions of multivariate similarity, cf. Subsection \ref{SubSec:Comparison}), 
including those based on \(\Phi\)-dependence \cite{gijbels2023} and
measures of multivariate concordance \cite{fuchs_dissimilarity_2021}, 
and elucidate their relation to hierarchical variable clustering based on (mutual) perfect dependence.
As type \ref{tPD:D} and type \ref{tMPD:D} dissimilarity functions 
we employ $d_{\Pi}^{p,q}$ and $d_{\mathrm{ave}}^{p,q}$, respectively, 
as measure of multivariate concordance we employ multivariate Spearman's footrule \cite{perez_nonparametric_2023, fuchs2019spearman},
while \(\Phi\)-dependence 
relies on mutual information \cite{gijbels2023,kojadinovic_agglomerative_2004} 
(measured by the Kullback–Leibler divergence \cite{kullback_information_1951}).

We employ the four above-mentioned hierarchical variable clustering methods for clustering five variables among which different types of association occur, including linear and nonlinear relationships as well as bivariate and multivariate dependencies.

Therefore, consider the set of random variables \(\bX = \{X_1,X_2,X_3,X_4,X_5\}\) with 
$X_1$ and $X_2$ being obtained from simulating a $2$-dimensional t-copula with correlation parameter $\rho=0$ and $\nu=0.1$ degrees of freedom, $X_3=-\mathrm{exp}(X_2)+\epsilon_1$ ($\epsilon_1 \sim\mathcal{N}(0, 0.2^2)$), $X_4=\log(-X_3)+\epsilon_2$ ($\epsilon_2\sim\mathcal{N}(0, 1)$), and $X_5=-\sin(1.5X_4+0.5X_2)$; 
Figures \ref{fig:Scatterplot} and \ref{fig:relation} depict and describe the (pairwise) dependence structures within \(\bX\), along with the expected clustering result if the focus lies on (mutual) perfect dependence.
\begin{figure}[htbp] 
    \centering
    \includegraphics[width=0.9\textwidth]{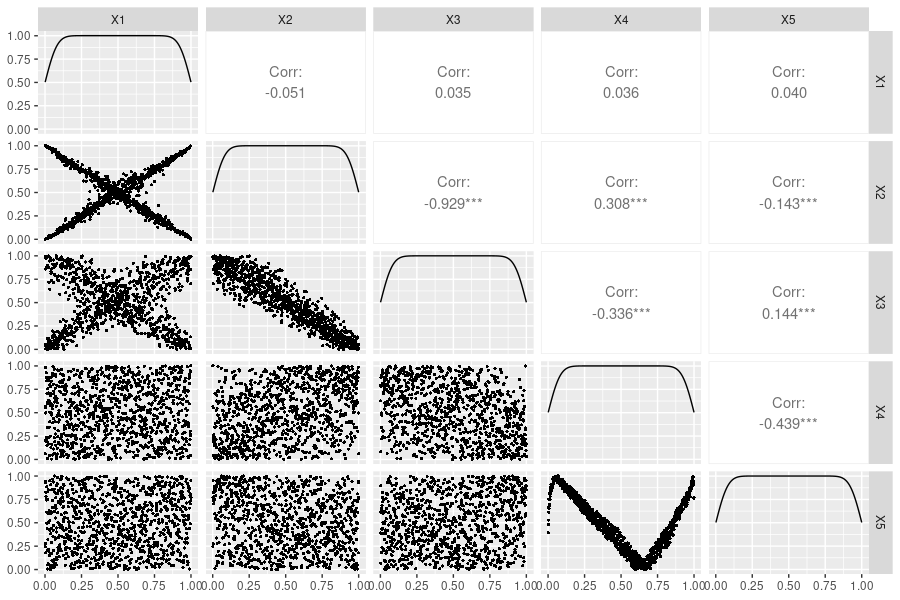}
    \caption{Pairwise dependence structures of the five random variables ($n = 1,000$) discussed in Subsection \ref{Sim:Sub:Comp}.}
    \label{fig:Scatterplot}
\end{figure}
\begin{figure}[htbp]
\centering
    \subfigure[Pairwise relationships between variables]{
        \begin{minipage}[b]{0.47\textwidth}
        \centering
        \includegraphics[width=1.11\textwidth]{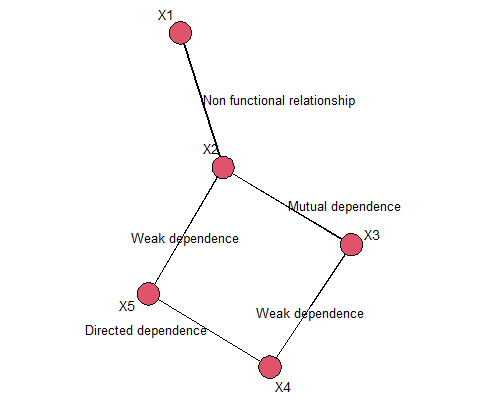} 
        \end{minipage}
    }
    \subfigure[Expected clustering result]{
        \begin{minipage}[b]{0.47\textwidth}
        \centering
        \includegraphics[width=1.11\textwidth]{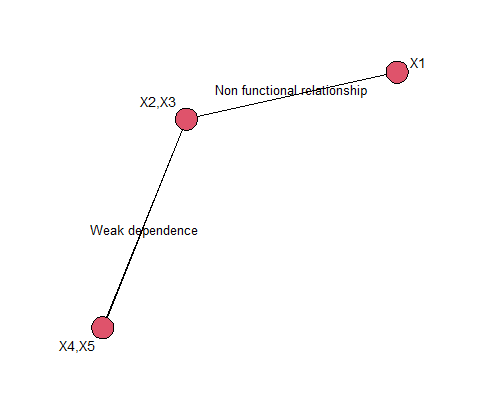} 
        \end{minipage}
    }
    \caption{Pairwise relationships and expected clustering result of the five variables ($n = 1,000$) discussed in Subsection \ref{Sim:Sub:Comp}.}
    \label{fig:relation}
\end{figure}
As depicted in Figure \ref{fig:relation} (a), there is a strong nonfunctional relationship between $X_1$ and $X_2$, a strong bi-directional relationship between $X_2$ and $X_3$, and a strong uni-directional relationship between $X_4$ and $X_5$. 
Weak dependence appears between $X_2$ and $X_5$ and between $X_3$ and $X_4$.

Before comparing the results obtained via the various hierarchical variable clustering methods, we first examine ways of determining an optimal number of clusters and hence an optimal partition.
At this point, recall that the focus of this paper is on clustering random variables and not data points, so that many optimality criteria known in literature are not applicable.
\\
Following \cite{kojadinovic_agglomerative_2004, gijbels2023, hansen_cluster_1997}, 
here an optimal partition is understood as one that maximizes the intra-cluster similarity and minimizes the inter-cluster similarity. 
The former criterion describes the homogeneity within the individual clusters and can be determined using the so-called \emph{average diameter} (Adiam) \cite{kojadinovic_agglomerative_2004, gijbels2023, hansen_cluster_1997} which, for a given partition into disjoint clusters \(\{\bX_1, \dots, \bX_l\}\), \(l\geq 2\), equals the arithmetic mean over the diameters \({\rm diam} (\bX_k)\) of the individual clusters \(\bX_k\), \(k \in \{1,\dots,l\}\), 
where 
\begin{align}\label{Def:ADiam}
    {\rm diam} (\bX)
    = \begin{cases}
      \underset{X,Y \in \bX}{\min} \quad \{1-d^{1,1}(X,Y)\}
      & \textrm{if } |\bX| > 1  
      \\
      1
      & \textrm{if } |\bX| = 1      
      \end{cases},
\end{align}
and \(|\bX|\) denotes the cardinality of \(\bX\). 
A larger Adiam indicates a higher homogeneity.
The latter criterion describes the separation of the different clusters and can be determined using the so-called \emph{maximum split} (Msplit) \cite{kojadinovic_agglomerative_2004, gijbels2023, hansen_cluster_1997}, the maximum over all \({\rm split} (\bX_k)\) of the individual clusters \(\bX_k\), \(k \in \{1,\dots,l\}\),
where 
\begin{align}\label{Def:MSplit}
    {\rm split} (\bX)
    = \underset{X \in \bX, Y \in \cX \backslash \bX}{\max} \quad \{1- d^{1,1}(X,Y)\}.
\end{align}
A smaller Msplit indicates a higher separation. 
For partitioning, we favour a high similarity within individual clusters and a low similarity among different clusters. 
Since during the hierarchical clustering procedure the homogeneity within the individual clusters decreases and the separation of the different clusters increases, an optimal partition comes at the best trade off of both quantities (see Figure \ref{fig:Partition} for an illustration).

\begin{remark}[Average diameter and maximum split]~~\label{Rem:ADiamMSplit} \\ 
Both criteria, average diameter and maximum split, use solely pairwise information of the random variables involved.
A more accurate but computationally demanding approach is to extend the definition of \({\rm diam} (\bX)\) to the maximum over all disjoint subsets \(\bX^\ast, \bY^\ast \subseteq \bX\) and that of \({\rm split} (\bX)\) to the minimum over all (disjoint) subsets \(\bX^\ast \subseteq \bX\) and \(\bY^\ast \subseteq \cX \backslash \bX\).
\end{remark}

\noindent
Another criterion for determining an optimal number of clusters is the so-called  \emph{Silhouette coefficient} (SC) going back to \cite{leonard_kaufman_finding_1990} and which is defined as a mixed coefficient involving the inter-cluster and intra-cluster dissimilarities.
An optimal number of clusters is then determined by computing the so-called silhoutte value for each object (here random variable), a value evaluating to what extend the considered object is clustered correctly, and averaging over all these values
(see Figure \ref{fig:Partition} for an illustration).

\begin{figure}[htbp]
\centering
    \subfigure[Type \ref{tPD:D}]{
        \begin{minipage}[b]{0.32\textwidth}
        \includegraphics[width=1\textwidth]{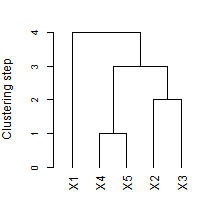} 
        \end{minipage}
        \begin{minipage}[b]{0.32\textwidth}
        \includegraphics[width=1\textwidth]{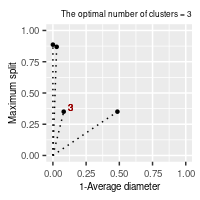}
        \end{minipage}
        \begin{minipage}[b]{0.32\textwidth}
        \includegraphics[width=1\textwidth]{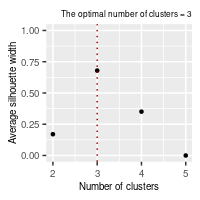}
        \end{minipage}
    }
    \subfigure[Type \ref{tMPD:D}]{
        \begin{minipage}[b]{0.32\textwidth}
        \includegraphics[width=1\textwidth]{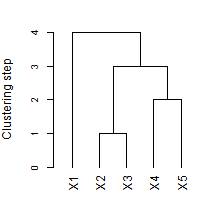} 
        \end{minipage}
        \begin{minipage}[b]{0.32\textwidth}
        \includegraphics[width=1\textwidth]{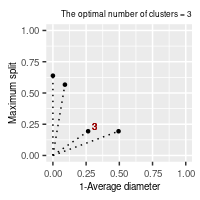}
        \end{minipage}
        \begin{minipage}[b]{0.32\textwidth}
        \includegraphics[width=1\textwidth]{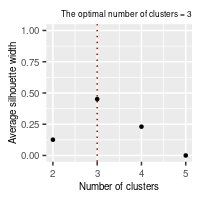}
        \end{minipage}
    }
    \subfigure[\(\Phi\)-dependence measure]{
        \begin{minipage}[b]{0.32\textwidth}
        \includegraphics[width=1\textwidth]{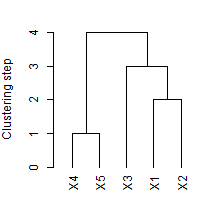} 
        \end{minipage}
        \begin{minipage}[b]{0.32\textwidth}
        \includegraphics[width=1\textwidth]{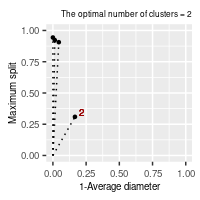}
        \end{minipage}
        \begin{minipage}[b]{0.32\textwidth}
        \includegraphics[width=1\textwidth]{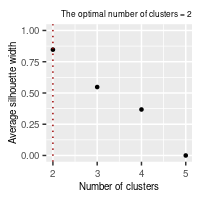}
        \end{minipage}
    }
    \subfigure[Multivariate Spearman's footrule]{
        \begin{minipage}[b]{0.32\textwidth}
        \includegraphics[width=1\textwidth]{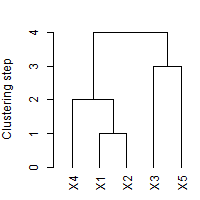} 
        \end{minipage}
        \begin{minipage}[b]{0.32\textwidth}
        \includegraphics[width=1\textwidth]{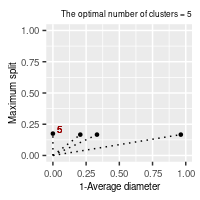}
        \end{minipage}
        \begin{minipage}[b]{0.32\textwidth}
        \includegraphics[width=1\textwidth]{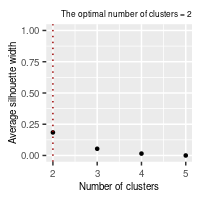}
        \end{minipage}
    }
    \caption{Dendrograms (left) and the corresponding trade-off between average diameter and maximum split (middle) \& Silhouette coefficient (right) for the simulation study performed in Subsection \ref{Sim:Sub:Comp} ($n=1,000$).}
    \label{fig:Partition}
\end{figure}

From Figure \ref{fig:Partition} we observe that the dendrograms obtained via type \ref{tPD:D} and \ref{tMPD:D} dissimilarity functions are quite similar with only one slight, but important and valid difference being that, in the case of type \ref{tPD:D}, variables $X_4$ and $X_5$ are clustered first whereas, in the case of type \ref{tMPD:D}, $X_2$ and $X_3$ are the first to cluster. 
No matter which of the two dissimilarity functions is used, $X_1$ is the last to be clustered.
The optimal partitions for type \ref{tPD:D} and type \ref{tMPD:D} are identical with the same optimal number of clusters, regardless which optimality criterion is used:
the five variables are partitioned into three clusters $\{ X_1\}$, $\{ X_2, X_3\}$, and $\{ X_4, X_5\}$, as expected (see Figure \ref{fig:relation}).
Instead, if $\Phi$-dependence is used, then the clustering result obtained is different from (but related to) the one above in such that $X_1$ is clustered together with $\{ X_2, X_3\}$. No matter which criterion is used, the optimal number of cluster is two resulting in the optimal partition $\{ X_1, X_2, X_3\}$ and $\{ X_4, X_5\}$.
In contrast to the three above observed and related optimal partitions, the clustering result obtained via multivariate Spearman's footrule exhibits no structural resemblance.
Even the optimal number of clusters is not unique: If Adiam and Msplit are used, the optimal partition is the finest partition, with each individual variable as a separate cluster. If instead the Silhouette coefficient is used then the optimal number of clusters is $2$. However, notice that the silhouette coefficient in this case is rather low meaning that the obtained partition may fail to be reasonable.

\begin{remark}
As seen in Figure \ref{fig:Partition}, the optimal number of clusters obtained via the silhouette coefficient tends to match the optimal number achieved with Adiam and Msplit, 
at least in those situations where the silhouette coefficient is large enough.
If, however, the silhouette coefficient is too small, it seems more advisable to rely on Adiam and Msplit for determining an optimal number of clusters.
This can be justified by the fact that in such a situation the silhouette coefficient is too close to \(0\) being the value reached for the finest partition.
\end{remark}

Since concordance measures such as multivariate Spearman's footrule are only capable of detecting monotone relationships among variables, the clustering result obtained for the given data is not very helpful.
In contrast, directed dependence concepts (such as type \ref{tPD:D} and type \ref{tMPD:D}) and $\Phi$-dependence are capable of detecting non-monotone relationships, resulting in convincing clustering results for the given data set.
However, it is important to recognise that the concepts considerably differ in terms of the extent to which functional and non-functional relationships are to be identified (cf. Subsection \ref{SubSec:Comparison}), and thus also in terms of the clustering result achieved for the given data set.

\subsection{A simulation study about noise resistance}
\label{SubSec:Data:Noise}

Building upon the continuity of type \ref{tPD:D} and \ref{tMPD:D} dissimilarity functions as shown in Section \ref{Sec:Prop.}, we now illustrate the resilience of the corresponding agglomerative hierarchical clustering algorithms to noise.

Therefore, consider the set of random variables \(\bX = \{X_1,X_2,X_3,X_4,X_5,X_6\}\) with 
$X_1 \sim \mathcal{N}(0,1)$
$X_2 = X_1^2 + X_1 + \varepsilon_2$,
$X_3 \sim \mathcal{N}(0,1)$,
$X_4 = \exp(-X_3) + \varepsilon_4$, 
$X_5 = X_4 + \sin(X_3) + \varepsilon_5$ and
$X_6 \sim \mathcal{N}(0,1)$,
where $\varepsilon_2, \varepsilon_4, \varepsilon_5 \sim \mathcal{N}(0,\sigma^2)$ with varying variance $\sigma \geq 0$. 
When focusing on (mutual) perfect dependence, the benchmark classification is given by: 
Cluster $1$ = $\{ X_1, X_2\}$, Cluster $2$ = $\{ X_3, X_4, X_5\}$, and Cluster $3$ = $\{X_6\}$.
For the case when no noise is present in the data, i.e., \(\sigma=0\), 
Figure \ref{fig:without_noise} depicts the clustering results obtained via type \ref{tPD:D} and \ref{tMPD:D} dissimilarity functions. 
There, it can be seen that the five variables are successfully clustered and the optimal partition coincides with the benchmark partition.

\begin{figure}[h] 
\centering
    \subfigure[Type \ref{tPD:D}]{
        \begin{minipage}[b]{0.32\textwidth}
        \includegraphics[width=1\textwidth]{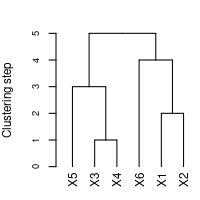} 
        \end{minipage}
        \begin{minipage}[b]{0.32\textwidth}
        \includegraphics[width=1\textwidth]{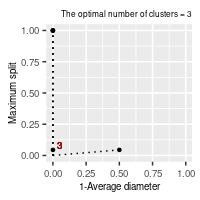}
        \end{minipage}
        \begin{minipage}[b]{0.32\textwidth}
        \includegraphics[width=1\textwidth]{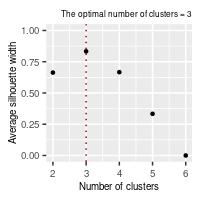}
        \end{minipage}
    }
    \subfigure[Type \ref{tMPD:D}]{
        \begin{minipage}[b]{0.32\textwidth}
        \includegraphics[width=1\textwidth]{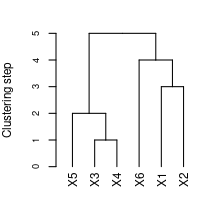} 
        \end{minipage}
        \begin{minipage}[b]{0.32\textwidth}
        \includegraphics[width=1\textwidth]{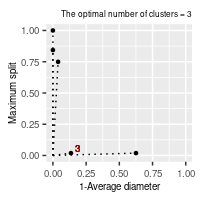}
        \end{minipage}
        \begin{minipage}[b]{0.32\textwidth}
        \includegraphics[width=1\textwidth]{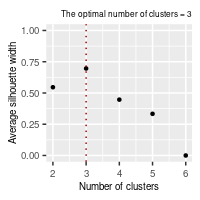}
        \end{minipage}
    }
    \caption{Dendrograms (left) and the corresponding trade-off between average diameter and maximum split (middle) \& Silhouette coefficient (right) for the simulation study performed in Subsection \ref{SubSec:Data:Noise} ($n=1,000$).}   
    \label{fig:without_noise}
\end{figure}

In what follows, we study the performance of the agglomerative hierarchical clustering algorithms when more and more noise is added to the data, i.e., when \(\sigma\) increases.
Thereby, we first analyse the resilience to noise of the clustering result and, in a second step, the resilience to noise of the optimality criteria.
Before that, however, let us briefly review tools that allow a comparison of the optimal partition obtained via the hierarchical clustering algorithms with the benchmark partition.
For determining the degree of congruence between two partitions, 
two popular indices exist: the Rand index (RI) \cite{rand_objective_1971} and the Fowlkes–Mallows index (FMI) \cite{fowlkes_method_1983}.
Both, Rand index and Fowlkes-Mallows index, rely on the number of same or different variables in the two partitions $A_1, A_2$. 
More precisely, when defining 
\begin{itemize}
    \item[-] TP as the number of pairs of objects that are in the same cluster in $A_1$ and in the same cluster in $A_2$,
    \item[-] FP as the number of pairs of objects that are in the same cluster in $A_1$ and in different clusters in $A_2$,
    \item[-] FN as the number of pairs of objects that are in different clusters in $A_1$ and in the same cluster in $A_2$,
    \item[-] TN as the number of pairs of objects that are in different clusters in $A_1$ and in different clusters in $A_2$,
\end{itemize}
Rand index (RI) and Fowlkes-Mallows index (FMI) are given as follows:
\begin{itemize}
    \item $RI = \frac{TP + TN}{TP + FP + FN + TN} $,
    \item $FMI = \sqrt{\frac{TP}{TP+FP}\frac{TP}{TP+FN}}$.
\end{itemize}
RI and FMI range from $0$ to $1$. A higher value for RI or FMI indicates a larger degree of congruence between the two partitions. 

\bigskip
\noindent
\emph{Resilience to noise regarding the clustering result:} \\
In a first simulation study, we investigate how much noise can be added to the data without changing the resulting partition.  
By following the benchmark partition we thereby set the optimal number of clusters to \(3\).
For several choices of \(\sigma\), we then compare the resulting partition with the benchmark partition and calculate their degree of congruence by means of the Rand index (RI) and the Fowlkes-Mallows index (FMI) and repeat each szenario $B = 100$ times.
Figure \ref{fig:error} depicts boxplots of the obtained values for RI and FMI.
From Figure \ref{fig:error} we observe that up to \(\sigma=4\) the clustering partition obtained via type \ref{tPD:D} and type \ref{tMPD:D} dissimilarity functions perfectly match the benchmark partition, revealing a high resilience to noise for both procedures.

\begin{figure}[htbp] 
    \centering
    \includegraphics[width=0.9\textwidth]{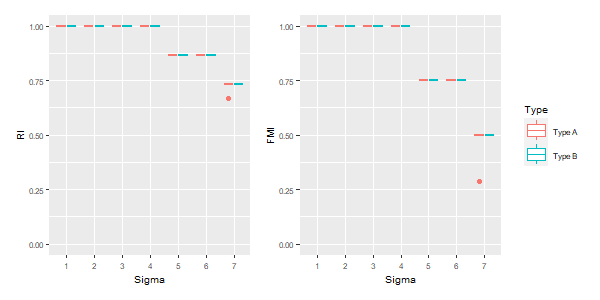}
    \caption{Boxplots of RI and FMI values (y-axis) for varying \(\sigma \in \{1,2,3,4,5,6,7 \} \) (x-axis) when investigating the resilience to noise with regard to the clustering result (Subsection \ref{SubSec:Data:Noise}).}
    \label{fig:error}
\end{figure}

\bigskip
\noindent
\emph{Resilience to noise regarding the optimal number of clusters:} \\
In a second simulation study we examine how much noise can be added to the data without changing the optimal number of clusters found for the benchmark partition.
Therefore, for several choices of \(\sigma\), we compute the optimal number of clusters obtained
(1) via the best trade off between average diameter and maximum split defined by \eqref{Def:ADiam} and \eqref{Def:MSplit} (bivariate/pairwise version),
(2) via the best trade off between average diameter and maximum split as described in Remark \ref{Rem:ADiamMSplit} (multivariate version) and
(3) via the Silhouette coefficient, and repeat the procedure $B = 100$ times.
Figure \ref{fig:num_AM&SC} depicts boxplots of the obtained optimal number of clusters.
From Figure \ref{fig:num_AM&SC} we observe that, compared to the clustering result, the optimality criteria are less resistant to noise no matter which dissimilarity function is used (type \ref{tPD:D} or type \ref{tMPD:D}).
Interestingly, the Silhouette coefficient performs more stable (up to \(\sigma=2.3\)) than average diameter and maximum split (pairwise, up to \(\sigma=0.3\)) and average diameter and maximum split (multivariate, up to \(\sigma=0.4\)), while, surprisingly, the latter two optimality criteria do not differ much, which is in favour of the pairwise version in terms of computation time.

\begin{figure}[htbp] 
    \centering
    \includegraphics[width=0.9\textwidth]{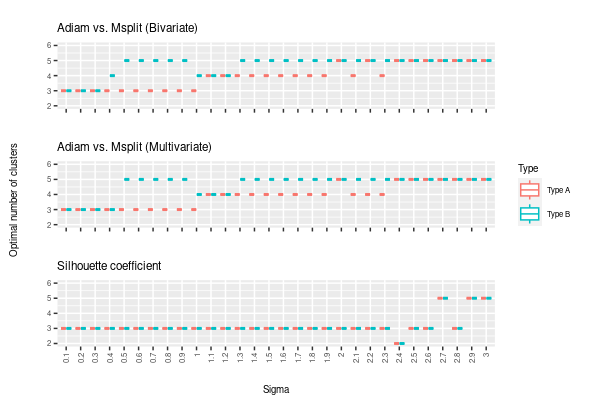}
    \caption{Boxplot of the optimal number of clusters (y-axis) using Adiam vs. Msplit and Silhouette coefficient for varying \(\sigma \in [0,3] \) (x-axis) when investigating the resilience to noise with regard to the optimal number of cluster (Subsection \ref{SubSec:Data:Noise}).}
    \label{fig:num_AM&SC}
\end{figure}

\subsection{A simulation study about clustering a larger number of variables}
\label{Subsect:Sim:Large}

Finally, we illustrate the performance of the agglomerative hierarchical clustering algorithm based on type \ref{tPD:D} and type \ref{tMPD:D} dissimilarity functions when the number of random variables to be clustered is large.

Therefore, consider the set of random variables \(\bX = \bX_A \cup \bX_B \cup \bX_C \cup \bX_D\) consisting of four independent sets of variables, namely 
\(\bX_A = \{A_1, \dots, A_5\}\), \(\bX_B = \{B_1, \dots, B_5\}\), \(\bX_C = \{C_1, \dots, C_5\}\), \(\bX_D = \{D_1, \dots, D_5\}\).
Given $\alpha \in \{0.4, 0.6, 0.8, 1\}$, 
each of the four sets is simulated from a $5$-dimensional Clayton copula with Kendall's tau value
$\tau_A = \alpha \cdot 0.2$, $\tau_B = \alpha \cdot 0.4$, $\tau_C = \alpha \cdot 0.6$, and $\tau_D = \alpha \cdot 0.8$, respectively. 
We then compare the resulting partition with the benchmark partition and calculate their degree of congruence by means of the Rand index (RI) and the Fowlkes-Mallows index (FMI) and repeat each szenario \(B = 100\) times. 
Figure \ref{fig:Larger} depicts boxplots of the obtained values for RI and FMI, and Figure \ref{fig:20var} shows the results for a single run.
We observe that, regardless of whether type \ref{tPD:D} or \ref{tMPD:D} dissimilarity function is used, the clustering result becomes more and more accurate (compared to the benchmark partition) as $\alpha$ increases. 
Clearly, the stronger the correlation between the variables (greater Kendall tau value of the copula) the earlier in the clustering process the variables are grouped into a cluster.

\begin{figure}[htbp] 
    \centering
    \includegraphics[width=0.9\textwidth]{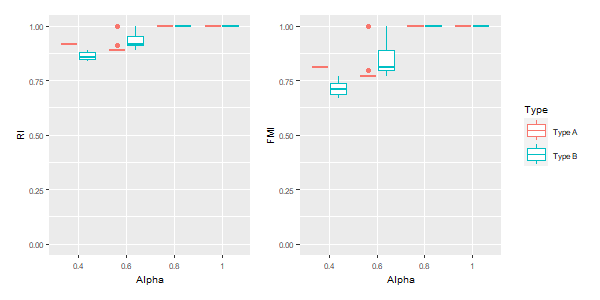}
    \caption{Boxplots of RI and FMI values (y-axis) for varying \(\alpha \in \{0.4, 0.6,0.8,1 \} \) (x-axis) for the simulation study performed in Subsection \ref{Subsect:Sim:Large}}
    \label{fig:Larger}
\end{figure}

\bigskip
\noindent {\bf Acknowledgement:} \\
Both authors gratefully acknowledge the support of the Austrian Science Fund (FWF) project
{P 36155-N} \emph{ReDim: Quantifying Dependence via Dimension Reduction}
and the support of the WISS 2025 project 'IDA-lab Salzburg' (20204-WISS/225/197-2019 and 20102-F1901166-KZP).  

\bigskip


\clearpage
\section{Supplementary material: Tables and figures} 
\label{appx:t_f}

\vspace*{\fill}
\begin{table}[!htbp]
\centering

    \begin{tabular}{@{}cccc@{}}
    \toprule
    Index & Variables & Description                                              & Category      \\ 
    \midrule
    1  & AMT       & Annual Mean Temperature {[}°C*10{]}                      & Temperature   \\
    2  & MTWeQ     & Mean Temperature of Wettest Quarter {[}°C*10{]}          & Temperature   \\
    3  & MTDQ      & Mean Temperature of Driest Quarter {[}°C*10{]}           & Temperature   \\
    4  & MTWaQ     & Mean Temperature of Warmest Quarter {[}°C*10{]}          & Temperature   \\
    5  & MTCQ      & Mean Temperature of Coldest Quarter {[}°C*10{]}          & Temperature   \\
    6  & MTWM      & Max Temperature of Warmest Month {[}°C*10{]}             & Temperature   \\
    7  & MTCM      & Min Temperature of Coldest Month {[}°C*10{]}             & Temperature   \\
    8  & TAR       & Temperature Annual Range {[}°C*10{]}                     & Temperature   \\
    9  & MDR       & Mean Diurnal Range {[}°C{]}                              & Temperature   \\
    10 & IT        & Isothermality (MDR/TAR)                                  & Temperature   \\
    11 & TS        & Temperature Seasonality {[}standard deviation{]}         & Temperature   \\
    12 & AP        & Annual Precipitation {[}mm/year{]}                       & Precipitation \\
    13 & PWeQ      & Precipitation of Wettest Quarter {[}mm/quarter{]}        & Precipitation \\
    14 & PDQ       & Precipitation of Driest Quarter {[}mm/quarter{]}         & Precipitation \\
    15 & PWaQ      & Precipitation of Warmest Quarter {[}mm/quarter{]}        & Precipitation \\
    16 & PCQ       & Precipitation of Coldest Quarter {[}mm/quarter{]}        & Precipitation \\
    17 & PWM       & Precipitation of Wettest Month {[}mm/month{]}            & Precipitation \\
    18 & PDM       & Precipitation of Driest Month {[}mm/month{]}             & Precipitation \\
    19 & PS        & Precipitation Seasonality {[}coefficient of variation{]} & Precipitation \\
    \bottomrule
    \end{tabular}
\caption{Bioclimatic variables in global climate data set from CHELSA used in Subsection \ref{Sim:DF.AB}}
\label{tab:clim_var}
\end{table}
\vspace*{\fill}

\begin{figure}[htbp] 
    \centering
    \subfigure[Type \ref{tPD:D}]{
        \begin{minipage}[b]{0.33\textwidth}
            \includegraphics[width=1\textwidth]{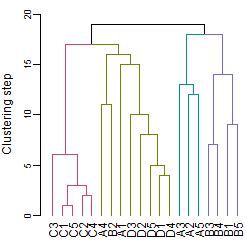} \\
            \includegraphics[width=1\textwidth]{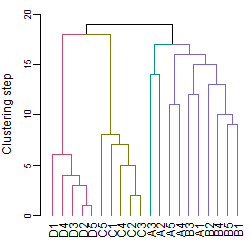} \\
            \includegraphics[width=1\textwidth]{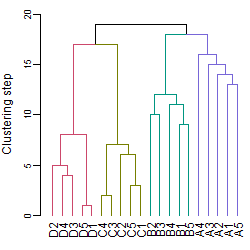} \\
            \includegraphics[width=1\textwidth]{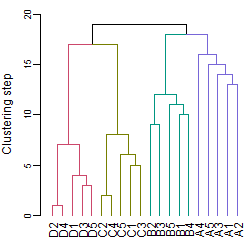} 
        \end{minipage}
    }
    \subfigure[Type \ref{tMPD:D}]{
        \begin{minipage}[b]{0.33\textwidth}
            \includegraphics[width=0.944\textwidth]{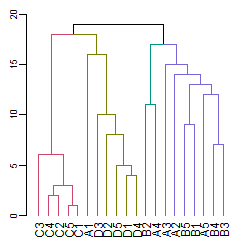} \\
            \includegraphics[width=0.944\textwidth]{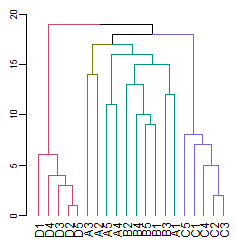} \\
            \includegraphics[width=0.944\textwidth]{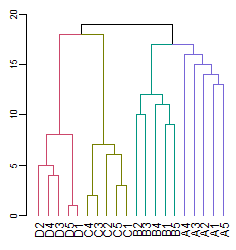} \\
            \includegraphics[width=0.944\textwidth]{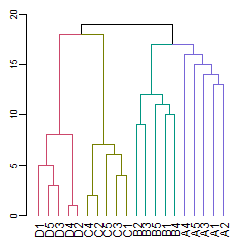}
        \end{minipage}
    }
\caption{Dendrograms of the hierarchical clustering of the 20 random variables in Subsection \ref{Subsect:Sim:Large} with varying \(\alpha \in \{0.4, 0.6,0.8,1\} \) (panels by row) ($n=500$).}
\label{fig:20var}
\end{figure}

\end{document}